\newcommand{\BEQA}{\begin{eqnarray}}
\newcommand{\EEQA}{\end{eqnarray}}
\def \foral {\textrm{for all }}
\def \pr  {\mathbf{Pr}}
\def \E  {\mathbf{E}}
\newtheorem{theorem}{Theorem}
\newtheorem{corollary}{Corollary}
\newtheorem{lemma}{Lemma}
\theoremstyle{definition}
\newtheorem{definition}{Definition}
\title{Avoiding Interruptions - QoE Trade-offs in Block-coded Streaming Media Applications}
\author{Ali ParandehGheibi$^\dag$\thanks{$\dag$ Department of Electrical Engineering and Computer Science, MIT.}, Muriel M\'edard$^\dag$, Srinivas Shakkottai$^\ddag$ \thanks{$\ddag$ ECE Department, Texas A\&M University.}, Asuman Ozdaglar$^\dag$\\
\normalsize{parandeh@mit.edu, medard@mit.edu, sshakkot@tamu.edu, asuman@mit.edu}}
\begin{document}

\maketitle

\begin{abstract}
We take an analytical approach to study Quality of user Experience (QoE) for media streaming applications. We use the fact that random linear network coding applied to blocks of video frames can significantly simplify the packet requests at the network layer and avoid duplicate packet reception. We model the receiver's buffer as a queue with Poisson arrivals and deterministic departures. We consider the probability of interruption in video playback (buffer underflow) as well as the number of initially buffered packets (initial waiting time) as the QoE metrics. We explicitly characterize the optimal trade-off between these metrics by providing upper and lower bounds on the minimum initial buffering required to achieve certain level of interruption probability for different regimes of the system parameters. Our bounds are asymptotically tight as the file size goes to infinity. Further, we show that for arrival rates slightly larger than the play rate, the minimum initial buffering remains bounded as the file size grows. This is not the case when the arrival rate and the play rate match.
\end{abstract}

\section{Introduction}\label{introduction_sec}

Peer-to-peer networks (P2P) are a fast-growing means of video delivery.  It has been estimated that between 35-90\% of Internet bandwidth is consumed by P2P applications~\cite{FraMoo03,GumDun03}.  Today, P2P file-sharing networks are seeing a drop in popularity \cite{Lab09}, but the original file sharing ideas are being used for video streaming in networks such as PPLive \cite{pplive} and QQLive \cite{qqlive}.  As smart phones become the medium of choice for Internet media access, P2P video distribution over the wireless medium is likely to gain significance.

P2P video streaming is generally accomplished by dividing the video file into \emph{blocks,} which are then further divided into packets for transmission.  After each block is received, it can be played out by the receiver.  In order to ensure smooth sequential playback, a fresh block must be received before the current block has been played.  If such a fresh block is not available the playback freezes, causing a negative user experience.  Blocks may be buffered in advance of playing out in order to provide a level of protection against a playback freeze, with more initial buffering providing a lower likelihood of playback interruption. Hence, there is a trade-off between the initial waiting time and playback interruptions.


In this paper, our main objective is to characterize the amount of buffering needed for a target probability of playback interruption over the duration of the video.  We consider a model in which network coding is used across the packets of each block.  A wireless user can obtain coded packets from multiple sources (other users and servers).  However, since the wireless channel is unreliable, packets cannot be obtained deterministically.  Thus, our question is \emph{how much should we buffer prior to playback in order to account for wireless channel variations?}

We first show how to model the receiver's buffer as an M/D/1 queue. We then provide upper and lower bounds on the minimum initial buffering required so that the playback interruption probability is below a desired level. The optimal trade-off between the initial buffering and the interruption probability depends on the file size as well as the arrival rate of the packets as compared to the playback rate. We show that our bounds are asymptotically tight as the file size goes to infinity. Moreover, if the arrival rate is slightly larger than the play rate, the minimum initial buffering for a given interruption probability remains bounded as the file size grows. However, when the arrival rate and the play rate match, the minimum initial buffer size grows as the square-root of the file size.

There is significant work in the space of P2P streaming.   Close to our work, \cite{ZhoChi07,BonMas08,ZhaLuiChi_09,YinSri10} develop analytical models on the trade-off between the steady state probability of missing a block, and buffer size under different block selection policies for live streaming in a full mesh P2P network with deterministic channels.  A further modification is to use random linear network coding techniques \cite{RLNC} to make block selection simpler \cite{Acedanski05, Rodriguez06, wangLi07, ChiZhang06} in the wired and wireless context.  In contrast, we focus on a very different scenario of streaming of pre-prepared content over unreliable wireless channels using network coding.  Further, our analysis is on transient effects---we are interested in the first time that video playback is interrupted as a function of the initial amount of buffering.

\section{System Overview}
We consider a media streaming system as follows.  Media files are usually divided into blocks consisting of multiple frames.  The video coding is such that all the frames in the block need to be available before any frames can be played.  Blocks are requested in sequence by the playback application from the user-end.  The server (or other peers) packetize the requested block and transmit them to the user as in Figure \ref{layers_fig}.  Obtaining the packets of a block from various peers in a P2P system requires the receiver to keep track of missing packets in a block, and request them from different peers.  However, since packet transmission is unreliable in a wireless context, requesting each packet from only one peer might cause unreasonable delays, while requesting a particular packet from multiple peers can result in inefficient resource usage.



\begin{figure}[htbp]
\centering
  \includegraphics[width=.3\textwidth]{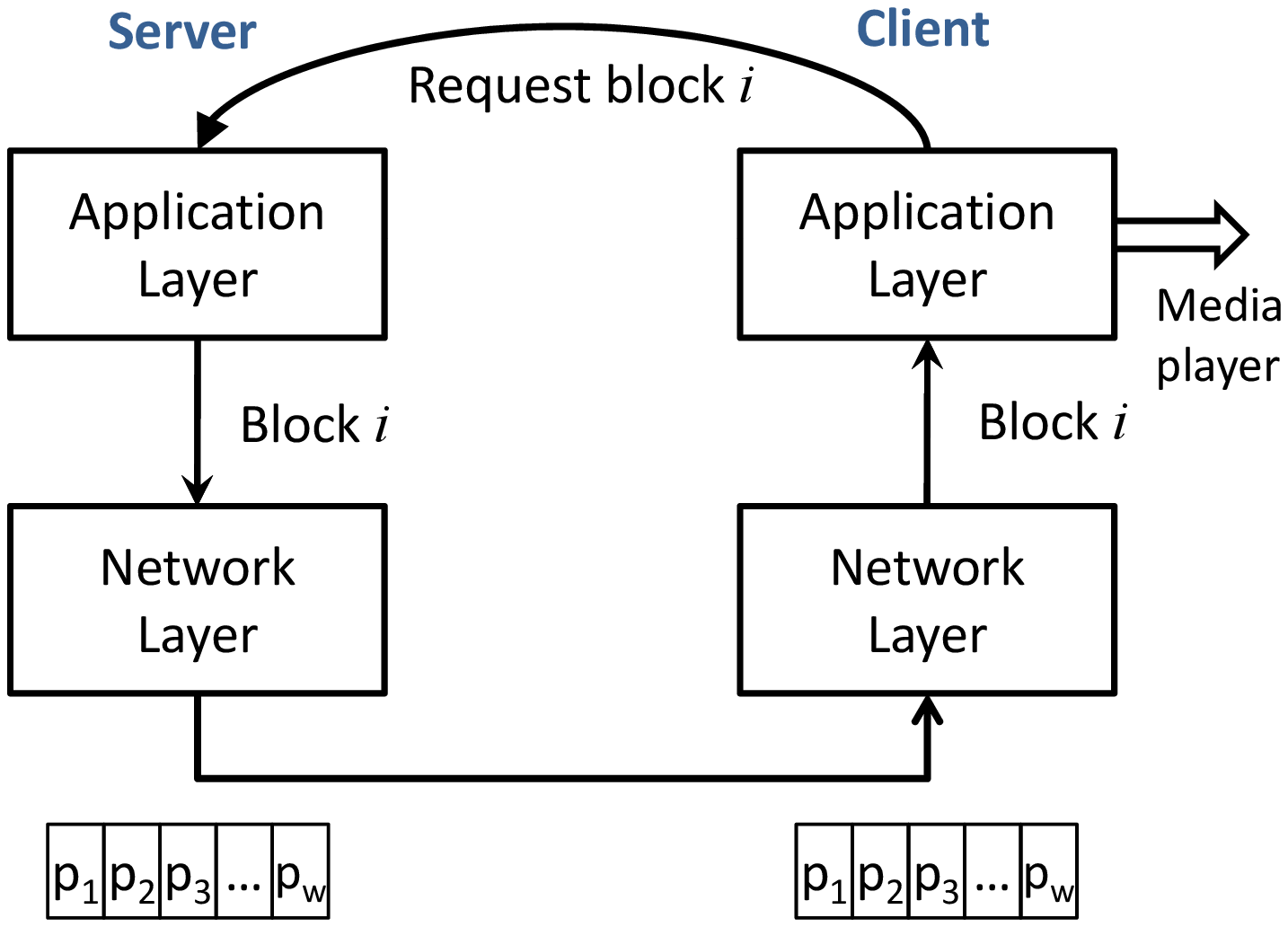}
  \caption{The media player (application layer) requires complete blocks. At the network layer each block is divided into packets and delivered. }\label{layers_fig}
\vspace{-0.1in}
\end{figure}


Random linear codes can be used to alleviate such inefficiencies.  Here, instead of requesting a particular packet from block $i$, the receiver simply requests a \emph{degree of freedom} of block $i.$  The server in turn responds with a random linear combination of all packets that it has in block $i$.  The coefficients of each combination are chosen uniformly at random from a Galois field of size $q$. The coded packets delivered to the receiver can be thought of as linear equations, where the unknowns are the original packets in block $i$. Block $i$ can be fully recovered by solving a system of linear equations if it is full rank. It can be shown that if the field size $q$ is large enough, the received linear equations are linearly independent with very high probability \cite{RLNC}. Therefore, for recovering a block of $W$ packets, it is sufficient to receive $W$ coded packets from different servers.  In a P2P system, it is unlikely that a randomly contacted peer would have all packets corresponding to a particular block.  However, storing blocks in a random linear coded fashion at all peers ensures that with high probability, the selected peer has a new degree of freedom to offer (see  \cite{wangLi07} for further discussion).  Thus, each received coded packet is likely to be independent of previous ones with probability $1-\delta(q),$ where $\delta(q)\rightarrow 0$ as $q \rightarrow \infty$.

Note that such random linear coding does not introduce additional decoding delay for each block, since the frames in a block can only be played out when the whole block is received.   So there is no difference in delay whether the end-user received $W$ uncoded packets of the block or $W$ independent coded packets that can then be decoded.

\section{System Model and QoE Metrics}
Consider a single user receiving a media file from various peers it is connected to. Each peer could be a wireless access point or another wireless user operating as a server. We assume that the video file consists of $T$ packets that are divided into blocks of $W$ packets. Each server sends  random linear combinations of the packets within the current block to the receiver. We assume that the linear combination coefficients are selected from a Galois field of size $q$. We assume the block size $W$ is small compared to the total length of the file, but large enough to ignore the boundary effects of moving from one block to the next.  Time is continuous, and the arrival process of packets from each peer is a Poisson process independent of other arrival processes. Since no redundant packet is delivered from different peers, we can combine the arrival processes into one Poisson process of rate $R_s$. We assume that each received coded packet is linearly independent from the previous ones with probability $1-\delta(q)$. Hence, the effective arrival process of \emph{useful} packets is Poisson with rate $R = R_s(1-\delta(q))$. Note that $R$ approaches $R_s$ for large enough field size. We normalize the playback rate to one, i.e., it takes one unit of time to play a single packet.  Thus, our simplified model is just a single-server-single-receiver system.   We also assume that the parameter $R$ is known at the receiver, which first buffers $D$ packets from the beginning of the file, and then starts the playback.

The presence of some packets in the buffer does not guarantee that there will be no interruption since we require $W$ packets corresponding to a block before it can be decoded and played out.  However, if there are at least $W$ packets in the buffer, there is at least one playable packet.  This is so since either the first $W$ packets in the buffer belong to the same block, or they belong to two different blocks. In the former case, the packets of the block can be decoded, and in the latter case, the first block of the two must be already decoded; otherwise, the next block would not be sent from the server. Therefore, the dynamics of the receiver's buffer size $Q(t)$ can be described as follows
\begin{equation}\label{buffer}
    Q(t) =  D + A(t) - t,
\end{equation}
where $D$ is the initial buffer size and $A(t)$ is a Poisson process of rate $R$.
We declare an interruption in playback when the buffer size decreases to the threshold $W$. For simplicity of notation, we assume that an \emph{extra} block is initially buffered (not taken into account in $D$). Hence, we can declare an interruption in playback when the buffer size reaches zero before reaching the end of the file. More precisely, let
\begin{eqnarray}\label{hitting_def}
\tau_e &=& \inf \{t: Q(t) \leq 0 \}, \nonumber \\
\tau_f &=& \inf \{t: Q(t) \geq T - t \},
\end{eqnarray}
where $\tau_f$ corresponds to time of completing the file download, because we have already played $\tau_f$ packets and the buffer contains the remaining $T-\tau_f$ packets to be played.  The video streaming is interrupted if and only if $\tau_e < \tau_f$.

We consider the following metrics to quantify Quality of user Experience (QoE). The first metric is the initial waiting time before the playback starts. This is directly captured by the initial buffer size $D$. Another metric that affects QoE is the probability of interruption during the playback denoted by
\begin{equation}\label{p_int}
p(D) = \pr \{\tau_e < \tau_f\},
\end{equation}
 where $\tau_e$ and $\tau_f$ are defined in (\ref{hitting_def}). In our model, the user expects to have an interruption-free experience with probability higher than a desired level $1-\epsilon$. Note that there is a  fundamental trade-off  between the interruption probability $\epsilon$ and the initial buffer size $D$. For example, owing to the randomness of the arrival process, in order to have zero probability of interruption, it is necessary to fully download the file, i.e., $D=T$. Nevertheless, we need to buffer only a small fraction of the file if user tolerates a positive probability of interruption. These trade-offs and their relation to system parameters $R$ and $T$ are addressed in the following section.

\section{Optimal QoE Trade-offs}

We would like to obtain the smallest initial buffer size so that the interruption probability is below a desired level $\epsilon$, which is denoted by
\begin{eqnarray}\label{Dmin}
  D^*(\epsilon) &=& \min \{D\geq 0: p(D) \leq \epsilon\},
\end{eqnarray}
where $p(D)$ is the interruption probability defined in (\ref{p_int}). Note that in general $p(D)$ and hence $D^*(\epsilon)$ depend on the arrival rate $R$ and the file size $T$ which are assumed to to be known and constant. In the following we characterize the optimal trade-off between the initial buffer size and the interruption probability by providing bounds on $D^*(\epsilon)$. An upper bound (achievability) on $D^*(\epsilon)$ is particularly useful, since it provides a sufficient condition for desirable user experience. A lower bound (converse) of $D^*(\epsilon)$ provides a necessary condition on the initial buffer size for a desirable level $\epsilon$ of interruption probability. Let us first introduce some useful lemmas.

\begin{lemma}\label{subMG_lemma}
Let $X(t) = e^{-rQ(t)}$, where $Q(t)$ is given by (\ref{buffer}), and define
\begin{equation}\label{gamma_def}
 \gamma(r) = r + R(e^{-r} -1 ).
\end{equation}
Then for every $r\geq 0$ such that $\gamma(r) \geq 0$, $X(t)$ is a sub-martingale with respect to the canonical filtration $\mathcal F_t = \sigma(X(s), 0\leq s \leq t)$, i.e., the smallest $\sigma$-field containing the history of the stochastic process $X$ up to time $t$.
Moreover, if $\gamma(r) =0$ then $X(t)$ is a martingale.
\end{lemma}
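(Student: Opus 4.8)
The plan is to verify the sub-martingale property directly from the definition, using the independent-increments structure of the Poisson process $A(t)$ that drives $Q(t)$. First I would note that since $Q(t)=D+A(t)-t$, for $s\leq t$ we can write $Q(t)=Q(s)+\bigl(A(t)-A(s)\bigr)-(t-s)$, where the increment $A(t)-A(s)$ is Poisson with parameter $R(t-s)$ and is independent of $\mathcal F_s$ (the canonical filtration of $X$, which coincides with that of $Q$ and of $A$). Hence, conditioning on $\mathcal F_s$,
\[
\E\bigl[X(t)\mid \mathcal F_s\bigr]
= e^{-rQ(s)}\, e^{r(t-s)}\, \E\bigl[e^{-r(A(t)-A(s))}\bigr]
= X(s)\, e^{r(t-s)}\,\exp\!\bigl(R(t-s)(e^{-r}-1)\bigr),
\]
where the last equality uses the moment generating function of a Poisson random variable, $\E[e^{-rN}]=\exp(\lambda(e^{-r}-1))$ for $N\sim\mathrm{Poisson}(\lambda)$. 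Collecting the exponents, this equals $X(s)\exp\bigl((t-s)\gamma(r)\bigr)$ with $\gamma(r)=r+R(e^{-r}-1)$ exactly as defined in \eqref{gamma_def}.

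From this identity both claims follow immediately: if $\gamma(r)\geq 0$ then $\exp((t-s)\gamma(r))\geq 1$ for $t\geq s$, so $\E[X(t)\mid\mathcal F_s]\geq X(s)$ and $X$ is a sub-martingale; if $\gamma(r)=0$ the factor is exactly $1$ and $X$ is a martingale. The only remaining routine checks are integrability — $\E[X(t)]=e^{-rD}e^{t\gamma(r)}<\infty$ since $Q(t)$ has exponential moments (again from the Poisson MGF) — and adaptedness, which is automatic because we use the canonical filtration generated by $X$ itself.

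The main obstacle is not the computation but making sure the conditioning step is legitimate: one must argue that $\mathcal F_s=\sigma(X(u):u\leq s)$ gives the same information as $\sigma(A(u):u\leq s)$, so that the increment $A(t)-A(s)$ is genuinely independent of $\mathcal F_s$. Since $r>0$, the map $x\mapsto e^{-rx}$ is a bijection, and $Q(u)=D+A(u)-u$ is a measurable bijection of $A(u)$ for each fixed $u$; hence the three filtrations generated by $X$, by $Q$, and by $A$ all coincide, and the independent-increments property of the Poisson process applies verbatim. Once this identification is in place, the argument is complete.
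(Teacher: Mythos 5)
Your proof is correct and follows essentially the same route as the paper: condition on $\mathcal F_s$, use stationary independent increments of the Poisson process and its moment generating function to collapse the conditional expectation to $X(s)e^{(t-s)\gamma(r)}$, and then read off the sub-martingale or martingale property from the sign of $\gamma(r)$. You add a worthwhile point the paper glosses over, namely that the canonical filtration of $X$ coincides with that of $A$ (so the independent-increments step is legitimate); note only that this bijection argument requires $r>0$, while the lemma allows $r=0$ — but in that case $X\equiv 1$ and the claim is trivial.
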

\begin{proof}
For every $t$, $|X(t)| \leq 1$. Hence, $X(t)$ is uniformly integrable. It remains to show that for every $t\geq 0$ and $h>0$,
\begin{equation}\label{subMG_def}
\E[X(t+h)|\mathcal F_t] \geq X(t) \quad \textrm{a.s.}
\end{equation}
$X(t)$ is a martingale if (\ref{subMG_def}) holds with equality.
The left-hand side of (\ref{subMG_def}) can be expressed as
\begin{eqnarray}
  \E[X(t+h)|\mathcal F_t] &=& \E\Big[e^{-r(Q(t+h) - Q(t))}\Big|\mathcal F_t\Big] X(t) \nonumber \\
    &=& \E\Big[e^{-r(A(t+h) - A(t))}\Big|\mathcal F_t\Big] e^{rh} X(t)  \nonumber \\
    &\stackrel{(a)}{=}& \E\big[e^{-rA(h)}\big] e^{rh} X(t)  \nonumber \\
    &\stackrel{(b)}{=}& e^{h(r+R(e^{-r}-1))} X(t) = e^{h\gamma(r)} X(t),\nonumber
\end{eqnarray}
 where  (a) follows from independent increment property of the Poisson process, and (b) follows from the fact that $A(t)$ is a Poisson random variable. Now,  it is immediate to verify (\ref{subMG_def}) for any $r$ with $\gamma(r) \geq 0$. Finally, note that if $\gamma(r) =0$, the equality in the above relations hold through, and (\ref{subMG_def}) holds with equality. Therefore, $X(t)$ is a martingale for $r$ with $\gamma(r) = 0$.
\end{proof}

Next, we use Doob's maximal inequality \cite{SP_book} to bound the interruption probability.
\begin{lemma}\label{pub_lemma}
Let $p(D)$ be the interruption probability defined in (\ref{p_int}), and $\gamma(r)$ be given by (\ref{gamma_def}). Then, for any $r\geq 0$ with $\gamma(r) \geq 0$
\begin{equation}\label{p_ub_gen}
    p(D) \leq  e^{-rD + T\gamma(r)}, \quad \foral D, T, R \geq 0.
\end{equation}
\end{lemma}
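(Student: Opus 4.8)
The plan is to combine Lemma~\ref{subMG_lemma} with Doob's maximal inequality applied to the submartingale $X(t) = e^{-rQ(t)}$. The key observation is that the interruption event $\{\tau_e < \tau_f\}$ is essentially a statement about the process $Q(t)$ hitting zero before time $T$ (since $\tau_f \le T$ always, because we play at rate one and the file has $T$ packets). More precisely, if $\tau_e < \tau_f$ then in particular $\tau_e < T$, so there exists some $t \le T$ with $Q(t) \le 0$, i.e., $e^{-rQ(t)} \ge 1$ for that $t$. Hence $\{\tau_e < \tau_f\} \subseteq \{\sup_{0 \le t \le T} X(t) \ge 1\}$, and it suffices to bound the probability of the latter.

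First I would fix $r \ge 0$ with $\gamma(r) \ge 0$ so that Lemma~\ref{subMG_lemma} applies and $X(t)$ is a nonnegative submartingale on $[0,T]$. Doob's maximal inequality for a nonnegative submartingale gives
\begin{equation}\label{doob_apply}
\pr\Big\{\sup_{0 \le t \le T} X(t) \ge \lambda\Big\} \le \frac{\E[X(T)]}{\lambda}, \quad \lambda > 0.
\end{equation}
Taking $\lambda = 1$ yields $p(D) \le \E[X(T)]$. Then I would compute $\E[X(T)] = \E[e^{-rQ(T)}]$ directly from $Q(T) = D + A(T) - T$ with $A(T)$ Poisson of rate $R$: this is $e^{-r(D-T)}\E[e^{-rA(T)}] = e^{-r(D-T)} e^{RT(e^{-r}-1)} = e^{-rD} e^{T(r + R(e^{-r}-1))} = e^{-rD + T\gamma(r)}$, which is exactly the claimed bound.

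One point that needs a little care: Doob's inequality in continuous time requires a version of $X$ with right-continuous paths, which holds here since $A(t)$ is a Poisson process (taken right-continuous) and the remaining terms in $Q(t)$ are continuous; alternatively one can apply the discrete-time maximal inequality along a countable dense set and pass to the limit using path regularity. I would note this briefly rather than belabor it. The only genuinely substantive step is the set inclusion $\{\tau_e < \tau_f\} \subseteq \{\sup_{0\le t\le T} X(t) \ge 1\}$, and the main obstacle — really a subtlety rather than a difficulty — is justifying that $\tau_f \le T$ so that the supremum can be restricted to the finite horizon $[0,T]$; this follows because $Q(t) \ge T - t$ is forced at $t = T$ regardless (indeed $Q(T) = D + A(T) - T \ge -T = T - t|_{t=T}$ is not automatic, but $\tau_f$ is the download completion time and the file has only $T$ packets, so download cannot take longer than... ) — more cleanly, since $A(t) \le$ total useful packets needed $= T - D$ eventually and playback consumes one packet per unit time, the buffer must empty or the file must complete by a bounded time; in any case $\tau_f$ is almost surely finite and for the bound we only need that on the event $\tau_e < \tau_f$ the hitting of zero by $Q$ occurs, so $X$ reaches $1$, and then \eqref{doob_apply} with the computed $\E[X(T)]$ closes the argument. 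Letting $T \to \infty$ is not needed; the bound holds for all finite $T$ as stated. Finally I would remark that the bound is only useful when the exponent $-rD + T\gamma(r)$ is negative, i.e., $D > T\gamma(r)/r$, which motivates optimizing over $r$ in the sequel.
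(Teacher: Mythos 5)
Your proof takes the same route as the paper's: reduce the interruption event to $\{\sup_{0\le t\le T} e^{-rQ(t)} \ge 1\}$, apply Doob's maximal inequality to the nonnegative submartingale of Lemma~\ref{subMG_lemma}, and evaluate $\E[e^{-rQ(T)}]$ via the Poisson moment generating function. One small correction in your justification: it is not true that $\tau_f \le T$ always (if $Q(T) < 0$ then $\tau_f > T$); the clean way to get $\tau_e \le T$ from $\tau_e < \tau_f$ is by contradiction---if $\tau_e > T$ then $Q(t) > 0$ for all $t \in [0,T]$, so $Q(T) > 0 = T - T$ and hence $\tau_f \le T < \tau_e$.
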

\begin{proof}
By definition of $p(D)$ in (\ref{p_int}), we have
\begin{eqnarray*}
  p(D) &=& \pr\{\tau_e < \tau_f\} \\
   &\leq& \pr\{\tau_e \leq T\} = \pr\Big\{\inf_{0\leq t\leq T}Q(t) \leq 0\Big\}  \\
   &=&  \pr\Big\{\sup_{0\leq t\leq T}e^{-rQ(t)} \geq 1\Big\} \\
   &\stackrel{(a)}{\leq}& \E[e^{-rQ(T)}] = \E[e^{-r(D+A(T)-T)}] \\
   &=& e^{-r(D-T)} e^{RT(e^{-r}-1)} = e^{-rD+T\gamma(r)},
\end{eqnarray*}
where (a) holds by applying Doob's maximal inequality \cite{SP_book} to the non-negative sub-martingale $X(t) = e^{-rQ(t)}$. Note that $X(t)$ is a sub-martingale for all $r$ with $\gamma(r)\geq 0$ by Lemma \ref{subMG_lemma}.
\end{proof}

\begin{lemma}\label{r_bar_lemma}
Let $\gamma(r)$ be as defined in (\ref{gamma_def}). Define $\bar r(R)$ as the largest root of $\gamma(r)$, i.e.,
\begin{equation}\label{r_bar}
    \bar r(R) = \sup\{r: \gamma(r) = 0 \}.
\end{equation}
The following relations hold:
\begin{eqnarray}
  \bar r(R) = 0, \quad &&\textrm{if \ } 0\leq R \leq 1, \ \ \label{case1}\\
  \frac{2(R-1)}{R} \leq \bar r(R) \leq 2(R-1), \quad &&\textrm{if \ } 1\leq R \leq 2, \ \ \label{case2} \\
  R-1 \leq \bar r(R) \leq R \leq 2(R-1) , \quad &&\textrm{if \ } R \geq 2.\ \  \label{case3}
\end{eqnarray}
\end{lemma}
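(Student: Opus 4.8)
The plan is to first pin down the shape of $\gamma$, so that each bound in (\ref{case1})--(\ref{case3}) reduces to checking the sign of $\gamma$ at one explicit point. I would start from the elementary facts $\gamma(0)=0$, $\gamma'(r)=1-Re^{-r}$ (strictly increasing in $r$ for $R>0$), and $\gamma'(0)=1-R$. If $0\le R\le 1$, then $\gamma'(r)>0$ for every $r>0$, so $\gamma$ is strictly increasing on $[0,\infty)$ and $\gamma(r)>\gamma(0)=0$ for $r>0$; hence no positive number solves $\gamma(r)=0$, i.e.\ $\bar r(R)=0$, which is (\ref{case1}). If $R>1$, then $\gamma$ is strictly decreasing on $[0,\ln R]$ and strictly increasing on $[\ln R,\infty)$; since $\gamma(0)=0$ we get $\gamma(\ln R)<0$, and $\gamma(r)\to\infty$ as $r\to\infty$, so $\gamma$ vanishes on $(0,\infty)$ at exactly one point $\bar r(R)>\ln R>0$, with $\gamma(r)<0$ on $(0,\bar r(R))$ and $\gamma(r)>0$ on $(\bar r(R),\infty)$; as $\gamma>0$ on $(-\infty,0)$, this $\bar r(R)$ is indeed the largest root. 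This yields the \emph{reduction principle} used throughout: for $c>0$, $\gamma(c)\le 0$ implies $c\le\bar r(R)$, and $\gamma(c)\ge 0$ implies $c\ge\bar r(R)$.

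Three of the four remaining bounds are then short evaluations. For $\bar r(R)\ge\frac{2(R-1)}{R}$, take $c=\frac{2(R-1)}{R}$ and estimate $e^{-c}\le 1-c+\frac{c^{2}}{2}$ (valid for $c\ge 0$); substituting this into $\gamma(c)$ reduces it to a manifestly non-positive expression, hence $c\le\bar r(R)$. The quadratic term is genuinely needed here: the linear bound $e^{-c}\le 1-c$ only yields a tautology. For $\bar r(R)\ge R-1$, note $\gamma(R-1)=-1+Re^{-(R-1)}\le 0$, since $e^{R-1}\ge R$ by $e^{x}\ge 1+x$. For $\bar r(R)\le R$, simply $\gamma(R)=Re^{-R}\ge 0$. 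Finally $R\le 2(R-1)$ for $R\ge 2$ is elementary, which completes (\ref{case3}).

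The delicate step, and the one I expect to be the main obstacle, is the upper bound $\bar r(R)\le 2(R-1)$ on $1\le R\le 2$, i.e.\ $\gamma\bigl(2(R-1)\bigr)\ge 0$ there: at $R=1$ this is tight to second order, so no single Taylor estimate of the exponential is sharp enough. Instead I would treat it as a statement in the variable $R$ and set $\phi(R):=\gamma\bigl(2(R-1)\bigr)=R-2+Re^{-2(R-1)}$. A direct computation gives $\phi(1)=0$, $\phi'(R)=1+(1-2R)e^{-2(R-1)}$ with $\phi'(1)=0$, and $\phi''(R)=4(R-1)e^{-2(R-1)}\ge 0$ for $R\ge 1$. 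Thus $\phi$ is convex on $[1,\infty)$ with $\phi(1)=\phi'(1)=0$, which forces $\phi(R)\ge 0$; hence $\gamma\bigl(2(R-1)\bigr)\ge 0$ and $\bar r(R)\le 2(R-1)$. Combined with the lower bound above, this is (\ref{case2}).

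Assembling the pieces yields the lemma: (\ref{case1}) for $0\le R\le 1$; $\frac{2(R-1)}{R}\le\bar r(R)\le 2(R-1)$ for $1\le R\le 2$; and $R-1\le\bar r(R)\le R\le 2(R-1)$ for $R\ge 2$. (The lower bound $R-1$ in fact holds for all $R\ge 1$, but $\frac{2(R-1)}{R}$ is the larger of the two precisely when $R\le 2$, which is why the two regimes report different lower bounds.) Apart from the convexity-in-$R$ argument for $\gamma\bigl(2(R-1)\bigr)\ge 0$, the whole proof is bookkeeping around the reduction principle together with the two standard exponential inequalities used above.
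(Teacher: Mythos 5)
Your proof is correct and follows essentially the same route as the paper: case (\ref{case1}) via monotonicity of $\gamma$, the lower bound $2(R-1)/R$ via the second-order Taylor estimate of $e^{-r}$, the bounds $R-1 \le \bar r(R) \le R$ by evaluating $\gamma(R-1)$ and $\gamma(R)$, and the delicate upper bound $\bar r(R)\le 2(R-1)$ by studying $\phi(R)=\gamma(2(R-1))$ via its derivatives in $R$ (the paper asserts $\partial_R\gamma(2(R-1))>0$, which is exactly your $\phi'(R)>0$ for $R>1$ following from $\phi'(1)=0$ and $\phi''\ge 0$). The only packaging difference is that you establish the unimodal shape of $\gamma$ and a one-line ``reduction principle'' up front, which replaces the paper's explicit extra step of checking $\gamma(r)\ge 0$ for $r\ge 2(R-1)$; this is a tidy but minor reorganization.
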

\begin{proof}
We omit the proof for brevity. See \cite{ISIT_report}.
\end{proof}

Next, we provide sufficient conditions on the initial buffer size to avoid interruptions with high probability for different regimes of the arrival rate.

\begin{theorem}\label{Dub_thm}
\emph{[Achievability]} Let $D^*(\epsilon)$ be defined as in (\ref{Dmin}), and $\bar r(R)$ be given by (\ref{r_bar}). Then
\begin{description}
  \item[(a)] For all $R>1$,
  \begin{equation}\label{Dub_a}
    D^*(\epsilon) \leq \frac{1}{\bar r(R)}\log\big(\frac1\epsilon\big).
  \end{equation}
  \item[(b)] For all $0\leq R \leq 1+ \Big(\frac{1}{2T}\log\big(\frac1\epsilon\big)\Big)^{\frac12}$,
  \begin{eqnarray}\label{Dub_b}
    D^*(\epsilon) &\leq& \min\Big\{\frac{1}{\bar r(R)}\log\big(\frac1\epsilon\big), \nonumber \\  && T(1-R)+\Big({2TR\log\big(\frac1\epsilon\big)}\Big)^{\frac12} \Big\}. \qquad
  \end{eqnarray}
\end{description}
\end{theorem}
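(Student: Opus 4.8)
The plan is to read both parts straight off the exponential estimate $p(D)\le e^{-rD+T\gamma(r)}$ of Lemma~\ref{pub_lemma}, choosing the free parameter $r\ge 0$ (always subject to $\gamma(r)\ge 0$) cleverly and then solving the resulting inequality for $D$. For part~(a) the natural choice is $r=\bar r(R)$: since $R>1$ gives $\bar r(R)>0$ by Lemma~\ref{r_bar_lemma}, and $\gamma(\bar r(R))=0$ by definition, Lemma~\ref{pub_lemma} collapses to $p(D)\le e^{-\bar r(R)D}$, which is $\le\epsilon$ as soon as $D\ge\frac1{\bar r(R)}\log(1/\epsilon)$; since that $D$ is nonnegative, it lies in the feasible set defining $D^*(\epsilon)$, giving exactly (\ref{Dub_a}).

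For part~(b) the first entry of the minimum is just (\ref{Dub_a}) again (and it is vacuously $+\infty$ when $R\le 1$, since then $\bar r(R)=0$), so the real content is the second entry. Here I would majorize $\gamma$ using the elementary inequality $e^{-r}\le 1-r+\tfrac{r^{2}}{2}$, valid for all $r\ge 0$ (two differentiations), to get $\gamma(r)\le(1-R)r+\tfrac{R}{2}r^{2}$. Substituting into Lemma~\ref{pub_lemma} and minimizing the quadratic exponent $-r\bigl(D-T(1-R)\bigr)+\tfrac{TR}{2}r^{2}$ over $r$ yields the minimizer $r^{\star}=\frac{D-T(1-R)}{TR}$ and the sub-Gaussian bound $p(D)\le\exp\!\bigl(-\tfrac{(D-T(1-R))^{2}}{2TR}\bigr)$; asking the right-hand side to be $\le\epsilon$ gives precisely $D\ge T(1-R)+(2TR\log(1/\epsilon))^{1/2}$, which is the second entry of (\ref{Dub_b}).

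The delicate point, and the one I expect to be the main obstacle, is that Lemma~\ref{pub_lemma} may only be invoked where $\gamma(r)\ge 0$, i.e.\ (for $R>1$) at $r=0$ or $r\ge\bar r(R)$, so the minimizer $r^{\star}$ — which at the candidate buffer $D=T(1-R)+(2TR\log(1/\epsilon))^{1/2}$ equals $\bigl(\tfrac{2}{TR}\log(1/\epsilon)\bigr)^{1/2}$ — must be shown to be feasible. For $R\le 1$ this is automatic, since $\gamma'(r)=1-Re^{-r}\ge 0$ forces $\gamma\ge 0$ on $[0,\infty)$. For $R>1$ I would split cases: if $r^{\star}\ge\bar r(R)$ the computation above applies verbatim and gives the second bound; if $r^{\star}<\bar r(R)$, I would instead show that the first entry of the minimum is the smaller one, so that (\ref{Dub_b}) simply follows from part~(a). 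For this reduction, the hypothesis $R\le 1+\bigl(\tfrac1{2T}\log(1/\epsilon)\bigr)^{1/2}$ forces $r^{\star}\ge\tfrac{2(R-1)}{\sqrt R}$, while $\bar r(R)\ge\tfrac{2(R-1)}{R}$ from Lemma~\ref{r_bar_lemma} makes $u:=\tfrac{2(R-1)}{R\,\bar r(R)}\in(0,1]$; the target inequality $\tfrac1{\bar r(R)}\log(1/\epsilon)\le T(1-R)+(2TR\log(1/\epsilon))^{1/2}$ is then a quadratic inequality in $r^{\star}$ whose smaller root is $\bar r(R)\bigl(1-\sqrt{1-u}\bigr)$, and one sits above that root because $r^{\star}\ge\sqrt R\,u\,\bar r(R)\ge u\,\bar r(R)\ge\bigl(1-\sqrt{1-u}\bigr)\bar r(R)$, the last step using $\sqrt{1-u}\ge 1-u$ on $[0,1]$. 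The bookkeeping of this case split, and checking that the elementary inequalities line up with the stated range of $R$, is where essentially all the effort goes; the rest is direct substitution into Lemma~\ref{pub_lemma}.
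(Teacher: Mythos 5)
Your proof of part (a) is identical to the paper's: substitute $r=\bar r(R)$ into Lemma~\ref{pub_lemma}, so $\gamma(\bar r(R))=0$ kills the file-size term and the bound $e^{-\bar r(R)D}\le\epsilon$ is solved for $D$.

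Your part (b) is correct but organized differently from the paper's proof. You relax first and optimize second: you bound $\gamma(r)\le(1-R)r+\tfrac{R}{2}r^2$ via $e^{-r}\le 1-r+\tfrac{r^2}{2}$ and then minimize the resulting quadratic exponent, landing on $r^\star=\tfrac{D-T(1-R)}{TR}$. The paper optimizes first and relaxes second: it minimizes the exact Chernoff exponent $-rD+T\gamma(r)$, obtaining the true stationary point $r^*=-\log\bigl(\tfrac1R(1-\tfrac{D}{T})\bigr)$ (note $T\gamma'(r)=D$ gives exactly this), and then applies the entropy-type inequality $-(1-z)\log(1-z)-z\le-\tfrac{z^2}{2}$ of Lemma~\ref{z_lemma}. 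Remarkably the two second-order relaxations produce the identical sub-Gaussian bound $\exp\bigl(-\tfrac{(D-T(1-R))^2}{2TR}\bigr)$ once you note $z=\tfrac{D-T(1-R)}{RT}$, so the endpoints agree. Because your $r^\star$ differs from the paper's $r^*$, the feasibility check $\gamma(\cdot)\ge 0$ plays out differently. The paper's $r^*\ge\bar r(R)$ is equivalent to the clean condition $\bar D\ge T(1-R+\bar r(R))$, which it verifies directly using $\bar r(R)\le 2(R-1)$ from Lemma~\ref{r_bar_lemma} under the hypothesis on $R$. Your $r^\star$ is not guaranteed to be $\ge\bar r(R)$ (you can only conclude $r^\star\ge\tfrac{2(R-1)}{\sqrt R}<2(R-1)$), which forces your case split: when $r^\star<\bar r(R)$ you instead show, via the quadratic inequality in $r^\star$ with roots $\bar r(R)(1\pm\sqrt{1-u})$, that the part-(a) bound is already the smaller term of the minimum. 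The quadratic argument is correct (your $r^\star$ lies between the two roots, the upper one because you are in the case $r^\star<\bar r(R)$); it is a slightly longer route to the same place, but it trades Lemma~\ref{z_lemma} for an elementary Taylor estimate and is a valid self-contained alternative.
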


\begin{proof}
First, note that for any upper bound $\bar p(D)$ of the interruption probability $p(D)$, any feasible solution of
\begin{equation}\label{Dmin_ub}
\bar D(\epsilon) =  \min \{D\geq 0: \bar p(D) \leq \epsilon\}
\end{equation}
provides an upper bound on $D^*(\epsilon)$. This is so since the optimal solution of the above problem is feasible in the minimization problem (\ref{Dmin}). If the problem in (\ref{Dmin_ub}) is infeasible, we use the convention $\bar D(\epsilon) = \infty$, which is a trivial bound on $D^*(\epsilon)$. The rest of the proof involves finding the tightest bounds on $p(D)$ and solving (\ref{Dmin_ub}).

\emph{Part (a):} By Lemma \ref{pub_lemma}, for $r = \bar r(R)$, we can write
$$p(D) \leq \bar p_a(D) =e^{-\bar r(R) D}, \quad \foral D, T, R \geq 0.$$

Solving $\bar p_a(D) = \epsilon$ for $D$ gives the result of part (a). Since $\bar r(R) = 0$ for $R\leq 1$ (cf. Lemma \ref{r_bar_lemma}), this bound is not useful in that range.

\emph{Part (b):} First, we claim that for all $D \geq T(1-R+\bar r(R))$,
$$p(D) \leq \bar p_b(D) = e^{-\frac12 TR z^2},$$
where $z = 1 - \frac1R\big(1-\frac{D}{T}\big)$. We use Lemma \ref{pub_lemma} with $r= r^* = -\log\big(\frac1R\big(1-\frac{D}{T}\big)\big)$ to prove the claim. Note that $r^* \geq 0$, because $D\geq T(1-R)$. In order to verify the second hypothesis of Lemma \ref{pub_lemma}, consider the following
\begin{eqnarray*}
  R(e^{-r^*} - e^{-\bar r(R)}) &=&   \bar r(R) + R(e^{-r^*}-1) - \gamma( \bar r(R)) \\
   &\stackrel{(a)}{=}&   \bar r(R) -R + (1-\frac{D}{T}) \\
   &\stackrel{(b)}{=}&   \frac1T \Big[ T(1-R +\bar r(R))-D \Big] \stackrel{(c)}{\leq} 0,
   \end{eqnarray*}
where (a) and (b) follow from the definition of $\bar r(R)$ and $r^*$, respectively, and (c) holds by the hypothesis of the claim. Thus, $r^* \geq \bar r(R)$. Using the facts that  $\bar r(R)$ is the largest root of $\gamma(r)$, and $\gamma(r) \rightarrow +\infty$ as $r \rightarrow \infty$, we conclude that $\gamma(r^*) \geq 0$. Now, we apply Lemma \ref{pub_lemma} to get
\begin{eqnarray*}
  p(D) &\leq& e^{-r^*D + T\gamma(r^*)} \\
   &\stackrel{(a)}{=}& e^{TR\big(\frac1R(1-\frac{D}{T})r^* - (1- e^{-r^*})\big)}  \\
   &\stackrel{(b)}{=}& e^{TR\big(-(1-z)\log(1-z)-z\big)}  \\
   &\stackrel{(c)}{\leq}& e^{-\frac12 TRz^2},
\end{eqnarray*}
where (a) and (b) follow from the definition of $\gamma(r)$ and $z$. We skip the proof of (c) for brevity (cf. Appendix of \cite{ISIT_report}). Therefore, the claim holds.

Now, let $\bar D = T(1-R)+\Big({2TR\log\big(\frac1\epsilon\big)}\Big)^{\frac12}.$ Using the claim that we just proved, we may verify that $p(\bar D) \leq \bar p_b(\bar D) = \epsilon$, if $\bar D \geq T(1-R+\bar r(R))$. In order to check the hypothesis of the claim, note that for $R \leq 1$, $\bar r(R) = 0$ (cf.  Lemma \ref{r_bar_lemma}), and for all $1 \leq R \leq 1+ \Big(\frac{1}{2T}\log\big(\frac1\epsilon\big)\Big)^{\frac12}$, we have
\begin{eqnarray*}
  \bar D - T(1-R) &=&  \Big({2TR\log\big(\frac1\epsilon\big)}\Big)^{\frac12} \\
  &\geq& 2T\Big({\frac{1}{2T}\log\big(\frac1\epsilon\big)}\Big)^{\frac12}\\
  &\stackrel{(d)}{\geq}& 2T(R-1) \stackrel{(e)}{\geq} T\bar r(R),
\end{eqnarray*}
where inequality (d) follows from the hypothesis of Part (b), and inequality (e)  is true by Lemma \ref{r_bar_lemma}. Therefore, $D^*(\epsilon) \leq \bar D$ for all $R \leq 1+ \Big(\frac{1}{2T}\log\big(\frac1\epsilon\big)\Big)^{\frac12}$. Note that, the upper bound that we obtained in Part (a) is also valid for all $R$. Hence, the minimum of the two gives the tightest bound.
\end{proof}

When the arrival rate $R$ is smaller than one (the playback rate), the upper bound in Theorem \ref{Dub_thm} consists of two components. The first term, $T(1-R)$, compensates the expected number of packets that are required by the end of $[0,T]$ period. The second component, $\Big({2TR\log\big(\frac1\epsilon\big)}\Big)^{\frac12} $, compensates the randomness of the arrivals to avoid interruptions with high probability. Note that this term increases by decreasing the maximum allowed interruption probability, and it would be zero for a deterministic arrival process. For the case when the arrival rate is larger than the playback rate, the minimum required buffer size does not grow with the file size. By continuity of the probability measure, we can show that the  upper bound  in Theorem \ref{Dub_thm} remains bounded for infinite file sizes. This is so since the buffer size in (\ref{buffer}) has a positive drift. Hence, if there is no interruption at the beginning of the playback period, it becomes more unlikely to happen later.

In the following, we show that the upper bounds presented in Theorem \ref{Dub_thm} are \emph{asymptotically tight}, by providing lower bounds on the minimum required buffer size $D^*(\epsilon)$, for different regimes of the arrival rate $R$. Let us first define the notion of a tight bound.

\begin{definition}\label{tight_def}
Let $\hat D$ be a lower or upper bound of the minimum buffer size $D^*(\epsilon)$ that depends on the file size $T$.  The bound $\hat D$ is an \emph{asymptotically tight}  bound if $\frac{|\hat D - D^*(\epsilon)|}{D^*(\epsilon)}$ vanishes as $T$ goes to infinity.
\end{definition}

\begin{figure}
\centering
  \includegraphics[width=3in]{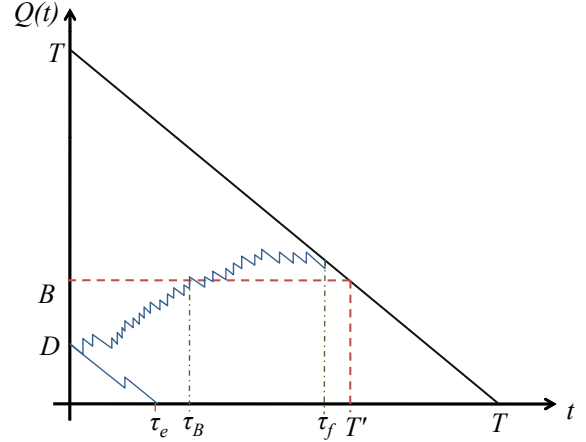}
  \caption{Two sample paths of the buffer size $Q(t)$ demonstrating the interruption event at time $\tau_e$, crossing the threshold $B$ at time $\tau_B$, and the download complete event at time $\tau_f$.}\label{Q_t_fig}
\vspace{-0.2in}
\end{figure}

\begin{theorem}\label{Dlb_thm}
\emph{[Converse]} Let $D^*(\epsilon)$ be defined as in (\ref{Dmin}), and $\bar r(R)$ be given by (\ref{r_bar}). Then
\begin{description}
  \item[(a)]  For all $R>1$,
  \begin{equation}\label{Dlb_a}
    D^*(\epsilon) \geq -\frac{1}{\bar r(R)}\log\Big( \epsilon + 2e^{-\frac{(R-1)^2}{4(R+1)}T}  \Big).
  \end{equation}
  \item[(b)]  For each $R \leq 1$ and $\epsilon \leq \frac{1}{16}$, if $T \geq C \log\big(\frac{1}{\epsilon}\big)$ then
  \begin{eqnarray}\label{Dlb_b}
    D^*(\epsilon) &\geq& T(1-R)+\frac12 \Big({2TR\log\big(\frac1\epsilon\big)}\Big)^{\frac12},\qquad
  \end{eqnarray}
  where $C$ is a constant that only depends on $R$.
\end{description}
\end{theorem}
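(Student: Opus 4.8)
The plan is to handle the two regimes with two different comparisons, in each case relating the finite‑horizon interruption event $\{\tau_e<\tau_f\}$ to an event whose probability is either computable or sharply estimable.

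\emph{Part (a) ($R>1$).} Since $\gamma(\bar r(R))=0$, Lemma~\ref{subMG_lemma} makes $M(t)=e^{-\bar r(R)Q(t)}$ a martingale, and it is bounded by $1$ on $[0,\tau_e]$ because $Q\ge 0$ there. First I would prove the infinite‑horizon hitting identity $\pr\{\tau_e<\infty\}=e^{-\bar r(R)D}$: optional stopping gives $\E[M(t\wedge\tau_e)]=e^{-\bar r(R)D}$, and letting $t\to\infty$ and using bounded convergence together with the fact that on $\{\tau_e=\infty\}$ the strictly positive drift forces $Q(t)\to\infty$ (so $M(t)\to 0$), while $M(\tau_e)=1$ on $\{\tau_e<\infty\}$, yields the claim. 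Because $\tau_f<\infty$ a.s., we have $\{\tau_e<\tau_f\}\subseteq\{\tau_e<\infty\}$, so the decomposition $\{\tau_e<\infty\}=\{\tau_e<\tau_f\}\sqcup\{\tau_f\le\tau_e<\infty\}$ gives
\[
p(D)=e^{-\bar r(R)D}-\pr\{\tau_f\le\tau_e<\infty\}.
\]

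It then remains to bound the correction term. On $\{\tau_f\le\tau_e\}$ one has $\tau_f=\tau_e\wedge\tau_f\le T$ (by time $T$ the buffer has either emptied or the file is done) and $Q(\tau_f)=T-\tau_f\ge 0$; applying the strong Markov property of the Poisson process at $\tau_f$ and the identity just proved to the restarted buffer,
\[
\pr\{\tau_f\le\tau_e<\infty\}=\E\!\left[\mathbf{1}_{\{\tau_f\le\tau_e\}}\,e^{-\bar r(R)(T-\tau_f)}\right].
\]
I would split on $\{\tau_f\le\theta T\}$ versus $\{\tau_f>\theta T\}$ with the symmetric cut $\theta=\tfrac12\bigl(1+\tfrac1R\bigr)\in(\tfrac1R,1)$. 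On the first event $e^{-\bar r(R)(T-\tau_f)}\le e^{-\bar r(R)(1-\theta)T}$, and since $1-\theta=\tfrac{R-1}{2R}$, Lemma~\ref{r_bar_lemma} gives $\bar r(R)(1-\theta)\ge\tfrac{(R-1)^2}{4(R+1)}$ in each of the ranges $1<R\le 2$ and $R\ge 2$. On the second event $e^{-\bar r(R)(T-\tau_f)}\le 1$, and $\pr\{\tau_f>\theta T\}=\pr\{A(\theta T)<T-D\}\le\pr\{A(\theta T)<T\}$ is a lower‑tail probability of $A(\theta T)\sim\mathrm{Poisson}\!\bigl(\tfrac{R+1}{2}T\bigr)$, which a Chernoff bound controls by $e^{-(R-1)^2T/(4(R+1))}$ for exactly this $\theta$. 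Adding the two pieces gives $\pr\{\tau_f\le\tau_e<\infty\}\le 2e^{-(R-1)^2T/(4(R+1))}$, hence $p(D)\ge e^{-\bar r(R)D}-2e^{-(R-1)^2T/(4(R+1))}$, which exceeds $\epsilon$ whenever $D<-\tfrac1{\bar r(R)}\log\!\bigl(\epsilon+2e^{-(R-1)^2T/(4(R+1))}\bigr)$; this is (\ref{Dlb_a}).

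\emph{Part (b) ($R\le 1$).} The key observation is that the single event $\{A(T)<T-D\}$ is already an interruption: $A(T)<T-D$ forces $Q(T)<0$, so $\tau_e<T$, while simultaneously fewer than $T-D$ packets have arrived by time $T$, so $\tau_f>T$; thus $\tau_e<T<\tau_f$. Consequently $p(D)\ge\pr\{A(T)<T-D\}$, and to prove $D^*(\epsilon)\ge D_1:=T(1-R)+\tfrac12\bigl(2TR\log(\tfrac1\epsilon)\bigr)^{1/2}$ it suffices (using $T-D>T-D_1$ for $D<D_1$) to show $\pr\{A(T)\le T-D_1\}>\epsilon$, i.e.
\[
\pr\Bigl\{\mathrm{Poisson}(\lambda)\le\lambda-\tfrac12\sqrt{2\lambda L}\Bigr\}>e^{-L},\qquad L=\log\tfrac1\epsilon,\ \ \lambda=RT.
\]
The coefficient $\tfrac12$ is chosen precisely so that the Chernoff exponent of the left tail at deviation $a=\tfrac12\sqrt{2\lambda L}$ is $\approx a^2/(2\lambda)=L/4$, making this probability of order $\epsilon^{1/4}$, comfortably above $\epsilon$; the hypotheses $\epsilon\le\tfrac1{16}$ and $T\ge C\log(\tfrac1\epsilon)$ (so that $\lambda$ is large relative to $L$, controlling the cubic Cram\'er correction to the exponent and the polynomial prefactor) are what turn this into a rigorous inequality valid for all admissible $\epsilon$ and all $T$. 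I would establish the Poisson estimate by an exponential change of measure to $\mathrm{Poisson}(\lambda-a)$ combined with a Cram\'er‑type local estimate of the tilted mass, or defer it, as the paper does for (c) above, to the appendix of \cite{ISIT_report}.

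\emph{Main obstacle.} In (a) the only delicate bookkeeping is the constant $(R-1)^2/(4(R+1))$, and the symmetric cut $\theta=\tfrac12(1+\tfrac1R)$ together with Lemma~\ref{r_bar_lemma} makes both tail pieces land exactly on that exponent, so this part is essentially mechanical. The real difficulty is the sharp Poisson lower‑tail bound in (b): the naive ``Chernoff is tight up to a polynomial factor'' estimate loses more than the slack available when $\epsilon$ is close to $\tfrac1{16}$, so one must invoke a Cram\'er/Edgeworth‑type refinement (or carry out a careful direct summation over the top of the partial sum) uniformly over $\lambda\ge C\log(\tfrac1\epsilon)$.
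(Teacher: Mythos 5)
Your proposal is correct for both parts (modulo the deferred Poisson lower‑tail estimate in (b), which you correctly identify as the crux and for which the paper supplies exactly the needed estimate in its Lemma~\ref{Poisson_tail_lb}, via Fox--Glynn), and in both parts it is genuinely different from the paper's proof.

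For part (a), the paper stops the martingale $e^{-\bar r(R)Q(t)}$ at $\tau=\min\{\tau_e,\tau_B,T'\}$, where $\tau_B$ is the first time $Q$ reaches an auxiliary high threshold $B=(1-\alpha)T$ with $\alpha=\tfrac{R+1}{2R}$, and then absorbs the two ``error'' contributions ($e^{-\bar r(R)B}$ and $\pr\{\tau=T'\}$) into the same exponential $e^{-(R-1)^2T/(4(R+1))}$. You instead first establish the exact infinite‑horizon ruin probability $\pr\{\tau_e<\infty\}=e^{-\bar r(R)D}$, subtract off $\pr\{\tau_f\le\tau_e<\infty\}$ computed via strong Markov at $\tau_f$, and split on the value of $\tau_f$ at $\theta T$ with $\theta=\tfrac12(1+\tfrac1R)$. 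Since $1-\theta=\tfrac{R-1}{2R}=1-\alpha$, your cut is precisely the paper's threshold seen from the $\tau_f$ side, and both tail pieces land on the same exponent using $\bar r(R)\ge R-1$ (or $\ge 2(R-1)/R$) from Lemma~\ref{r_bar_lemma} together with the Poisson Chernoff lower‑tail bound. Your route is somewhat cleaner in that the main term $e^{-\bar r(R)D}$ appears as an identity rather than as one term in an optional‑stopping inequality, and the correction has a transparent probabilistic meaning.

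For part (b), your argument is substantially simpler than the paper's. The paper partitions $[0,T]$ into $[0,T']$ and $[T',T]$, introduces two affine boundary functions $u_1,u_2$, bounds the boundary‑crossing probabilities by Lemma~\ref{div_lemma}, conditions on $Q(T')\le D_2$, and finally invokes a Poisson lower‑tail lower bound to piece the contributions together. You observe the single inclusion $\{A(T)<T-D\}\subseteq\{\tau_e<\tau_f\}$ (the event simultaneously forces $Q(T)<0$, so $\tau_e\le T$, and $A(T)<T-D$, so $\tau_f>T$), reducing everything to one Poisson lower‑tail lower bound $\pr\{\mathrm{Poisson}(\lambda)\le\lambda-\tfrac12\sqrt{2\lambda L}\}\ge e^{-L}$. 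Plugging in the paper's Lemma~\ref{Poisson_tail_lb} with $m=\sqrt{L/2}$ discharges exactly this estimate under $\epsilon\le 1/16$ and $T\ge C\log(1/\epsilon)$ (the latter guaranteeing $m\le\sqrt{\lambda}/20-1$), so the deferral you flag is filled by a result already in the appendix. Your argument obtains the same coefficient $\tfrac12$ and the same side conditions, but with far less machinery; the paper's boundary‑function approach does not appear to buy a sharper constant here.
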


\begin{proof}
\emph{Part(a):} Similarly to the argument as in the proof of Theorem \ref{Dub_thm}, it is sufficient to provide a lower bound on $p(D)$ defined in (\ref{p_int}).

Define $\tau_B$ as the first time that $Q(t)$ crosses a threshold $B>D$, i.e.,
$$\tau_B = \inf \{t: Q(t) \geq B \}.$$

A necessary condition for the interruption event to happen is to have the receiver's buffer emptied before time $T' = T-B$, or crossing the threshold $B$ (see Figure \ref{Q_t_fig}). In particular,
\begin{equation}\label{p_lb}
p(D) = \pr \{\tau_e < \tau_f\} \geq \pr \big\{\tau_e \leq \min\{\tau_B, T'\}\big\}.
\end{equation}

Define the stopping time
$$\tau = \min\{\tau_e, \tau_B, T'\},$$
and let $Y(t) = e^{-\bar r(R) Q(t)}$, where $\bar r(R) >0$ is given by (\ref{r_bar}). By Lemma \ref{subMG_lemma}, $Y(t)$ is a martingale. Moreover, $Y(t) \leq 1$, and $\tau \leq T < \infty$. Therefore, we can apply Doob's optional stopping theorem \cite{SP_book} to get
\begin{eqnarray}
  e^{-\bar r(R) \cdot D} &=& \E[Y(0)] \stackrel{(a)}{=} \E[Y(\tau)] \nonumber \\
  &\stackrel{(b)}{\leq}& e^{-\bar r(R) \cdot 0} \big(\pr\{\tau = \tau_e\} + \pr \{\tau = T'\}\big) \nonumber \\
  && + e^{-\bar r(R) \cdot B} \big(1- \pr\{\tau = \tau_e\} - \pr \{\tau = T'\}\big) \nonumber \\
  &\leq& \pr\{\tau = \tau_e\}  + \pr \{\tau = T'\}+ e^{-\bar r(R) \cdot B} \nonumber \\
  &\stackrel{(c)}{\leq}& p(D) +  \pr \{\tau = T'\} + e^{-\bar r(R) \cdot B}, \nonumber
\end{eqnarray}
where (a) is the result of Doob's optional stopping time. (b) holds because $Y(t) \leq 1$ for all $t$, and $Y(t) \leq e^{-\bar r(R) \cdot B}$ if $Q(t) \geq B$. Finally, (c) follows from (\ref{p_lb}). Rearranging the terms in the above relation, we obtain
\begin{equation}\label{p_lb2}
p(D) \geq e^{-\bar r(R)D} - e^{-\bar r(R)B} -  \pr \{\tau = T'\}.
\end{equation}

Now, choose $B = (1-\alpha)T$, where $\alpha = \frac{R+1}{2R} > \frac{1}{R}$ for all $R>1$.  For all $D,T \geq 2$, we have
\begin{eqnarray}\label{QT'_ub}
 \pr \{\tau = T'\} &\stackrel{(a)}{\leq}& \pr\{0 \leq Q(T-B) \leq B\} \nonumber \\
  &\stackrel{(b)}{=}& \pr\{\alpha T - D \leq A(\alpha T) \leq T - D\} \nonumber  \\
  &\leq& \pr\big\{ A(\alpha T) \leq R\alpha T - ((R\alpha -1 )T + D)\big\} \nonumber \\
  &\stackrel{(c)}{\leq}& \exp\bigg(-\frac{1}{2R\alpha T} \big((R\alpha -1)T + D - \frac32 \big)^2\bigg) \nonumber \\
  &\stackrel{(d)}{\leq}& \exp\bigg(-\frac{(R-1)^2}{4(R+1)}T\bigg),
\end{eqnarray}
where (a) holds because $Q(t)$ cannot be negative or above the threshold $B$ if stopping at $T'$, and (b) follows from the buffer dynamics in (\ref{buffer}). Recall that $A(\alpha T)$ is a Poisson random variable with mean $R\alpha T$. Since $\alpha \geq \frac1R$, (c) holds for $D, T\geq 2$ by employing Lemma \ref{Poisson_tail} with
$$\lambda = R\alpha T \geq T \geq 2, \quad k = (R\alpha -1 )T + D \geq D \geq 2.$$
Finally, (d) is immediate by definition of $\alpha$ noting that $D \geq 2$.

By Lemma \ref{r_bar_lemma}, we have $\bar r(R) \geq (R-1)$ for all $R > 1$. Therefore, we can bound the second term in (\ref{p_lb2}) as follows
\begin{eqnarray}\label{erB_ub}
  e^{-\bar r(R)B} &\leq& \exp\bigg(- (R-1)\frac{R-1}{2R} T \bigg) \nonumber \\
  &\leq& \exp\bigg(-\frac{(R-1)^2}{4(R+1)}T\bigg).
\end{eqnarray}

Combine the bounds in (\ref{QT'_ub}) and (\ref{erB_ub}) with (\ref{p_lb2}) to obtain
\begin{equation}\label{p_lb3}
    p(D) \geq e^{-\bar r(R)D} - 2e^{-\frac{(R-1)^2}{4(R+1)}T}, \quad \foral D,T \geq 2.
\end{equation}

Therefore, $p(D) \geq \epsilon$ if $ D = -\frac{1}{\bar r(R)}\log\Big( \epsilon + 2e^{-\frac{(R-1)^2}{4(R+1)}T} \Big) \geq 2$. This immediately gives the result in (\ref{Dlb_a}). For the case in which $D<2$ or $T<2$, the claim holds trivially.

\begin{figure}
\centering
  \includegraphics[width=3in]{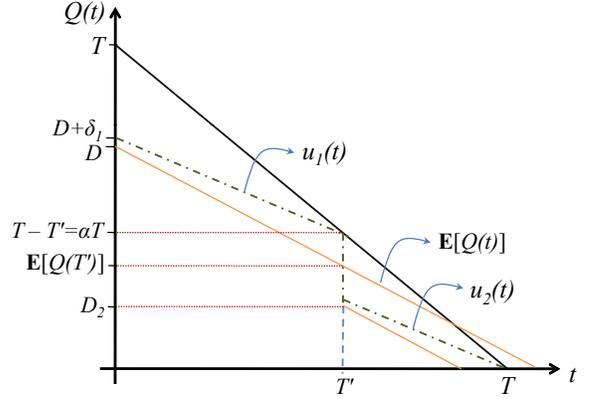}
  \caption{Guideline for the proof of Theorem \ref{Dlb_thm}(b).}\label{Q_LB_fig}
\vspace{-0.2in}
\end{figure}

\emph{Part(b):} It is sufficient to show $p(D) \geq \epsilon$ for $\epsilon \leq \frac{1}{16}$ and $T$ large enough where
\begin{equation}\label{Dlb_def}
D =  T(1-R)+\frac12 \Big({2TR\log\big(\frac1\epsilon\big)}\Big)^{\frac12}.
\end{equation}

Let us first define the boundary functions
\begin{eqnarray}
  u_1(t) &=& D + \delta_1 + (R+\frac{\delta_1}{T'} -1)t, \ t \in [0, T'], \nonumber \\
  u_2(t) &=& D_2 + \delta_2 + (R+\frac{\delta_2}{T-T'} -1)(t-T'), \ t \in [T', T], \nonumber 
\end{eqnarray}
where $T' = (1-\alpha)T$ for some constant $0 < \alpha <1$, and
\begin{eqnarray}
  \delta_1 &=& \frac12 \alpha TR - \frac14 \Big({2TR\log\big(\frac1\epsilon\big)}\Big)^{\frac12},  \label{delta1}\\
  \delta_2 &=& \Big({\alpha TR\log\big(\frac1\epsilon\big)}\Big)^{\frac12}, \label{delta2} \\
  D_2 &=& \alpha T(1-R) - 2\Big({\alpha TR\log\big(\frac1\epsilon\big)}\Big)^{\frac12}. \label{D2}
\end{eqnarray}
Also, denote by $\tau_i$ the first time to hit the boundary $u_i(t)$, i.e.,
$$\tau_i = \inf \{t: Q(t) \geq u_i(t) \}, \quad i =1,2.$$

Observe that for every sample path of the buffer size $Q(t)$, the only way for completing the file download, $Q(t) = T-t$, is to cross the boundary functions $u_i(t)$ or have $Q(T') \geq D_2$ (cf. Figure \ref{Q_LB_fig}). This gives a necessary condition for the interruption event. Hence, 
\begin{eqnarray}
  p(D) &=& \pr \{\tau_e < \tau_f\}\nonumber \\
   &\geq& \pr \{\tau_e \leq \min\{\tau_1, T'\}\}\nonumber \\
  &&  + \pr\{\tau_e \leq \tau_2, T' \leq \min\{\tau_e, \tau_1\} , Q(T') \leq D_2 \} \nonumber \\
  &=& \pr \{\tau_e \leq \min\{\tau_1, T'\}\}\nonumber \\
   && +\pr\{\tau_e \leq \tau_2 |  T' \leq \min\{\tau_e, \tau_1\} , Q(T') \leq D_2\} \big[1 \nonumber \\
  && - \pr\{\tau_1 \leq \min\{\tau_e, T'\} \}  - \pr\{\tau_e \leq \min\{\tau_1, T'\} \}\nonumber \\
  &&- \pr\{T' \leq \min\{\tau_e, \tau_1\} , Q(T') > D_2\}\big] \nonumber \\
  &\geq& \pr\{\tau_e \leq \tau_2 |  T' \leq \min\{\tau_e, \tau_1\} , Q(T') \leq D_2\} \big[1 \nonumber \\
  && - \pr\{\tau_1 \leq \min\{\tau_e, T'\} \} \nonumber \\
  &&- \pr\{T' \leq \min\{\tau_e, \tau_1\} , Q(T') > D_2\}\big]. \label{p_lb_terms}
\end{eqnarray}

In the following we provide bounds on each of the terms in (\ref{p_lb_terms}). By Markov property of the Poisson process we have
\begin{eqnarray}
&&  \pr\{\tau_e \leq \tau_2 |  T' \leq \min\{\tau_e, \tau_1\} , Q(T') \leq D_2\}  \nonumber \\
&&  = \pr\{\tau_e \leq \tau_2 | Q(T') \leq D_2\}  \nonumber \\
&& \geq \pr\{\tau_e \leq \tau_2 | Q(T') = D_2\} \nonumber \\
&& = 1- \pr\{\tau_2 \leq \tau_e | Q(T') = D_2\} \nonumber \\
&& \geq 1- \pr\Big\{\sup_{T'\leq t\leq T}Q(t) \geq u_2(t)  | Q(T') = D_2\Big\} \nonumber \\
&& \geq 1 - \exp\bigg(-\frac{\delta_2^2}{R(T-T')}\bigg) = 1-\epsilon, \label{plb_term1}
\end{eqnarray}
where the last inequality follows from Lemma \ref{div_lemma} with parameters
$\delta = \delta_2$ and $s = \frac{\delta_2}{T-T'}$, if its hypothesis $s\leq R$ is satisfied. This is equivalent to having $T$ satisfy
\begin{equation}\label{T1}
    T \geq \frac{1}{\alpha R}\log\big(\frac1\epsilon\big).
\end{equation}

Similarly, by employing Lemma \ref{div_lemma} with $\delta = \delta_1$ and $s = \frac{\delta_1}{T'}$ we have
\begin{eqnarray}
  && \pr\{\tau_1 \leq \min\{\tau_e, T'\} \} \leq  \pr\Big\{\sup_{0\leq t\leq T'}Q(t) \geq u_1(t) \Big\} \nonumber  \\
  &&\leq  \exp\bigg(-\frac{\delta_1^2}{RT'}\bigg) \nonumber \\
  && = \exp\bigg(-\frac{\alpha^2RT}{4(1-\alpha)}\Big[1-\Big(\frac{\log\big(\frac1\epsilon\big)}{2\alpha^2RT}\Big)^{\frac12}\Big]^2\bigg). \nonumber \end{eqnarray}
Note that the hypothesis of Lemma \ref{div_lemma} is satisfied here for all $\alpha \leq \frac23$. Moreover, if
\begin{equation}\label{T2}
    T \geq \frac{16}{\alpha^2 R}\log\big(\frac1\epsilon\big),
\end{equation}
we have
\begin{eqnarray}
 \pr\{\tau_1 \leq \min\{\tau_e, T'\} \}  &\leq& \exp\bigg(-4\log\big(\frac1\epsilon\big)[1-\frac{1}{\sqrt{32}}]^2\bigg) \nonumber\\
 &\leq& \exp\Big(-2\log\big(\frac1\epsilon\big)\Big) = \epsilon^2.  \label{plb_term2}
\end{eqnarray}

For the last term in (\ref{p_lb_terms}) write
\begin{eqnarray}
&& 1 - \pr\{T' \leq \min\{\tau_e, \tau_1\} , Q(T') > D_2\} \nonumber \\
&& \geq 1 - \pr\{Q(T')> D_2\} = \pr\{Q(T') \leq D_2\} \nonumber \\
&& = \pr\{D + A(T') - T' \leq D_2\} \nonumber \\
&& = \pr\{A(T') \leq T'R -m\sqrt{T'R} \}, \label{plb_term3}
\end{eqnarray}
where $A(T')$ is a Poisson random variable with mean $RT'$, and 
\begin{equation}\label{m_def}
m = \Big(\frac{\log\big(\frac1\epsilon\big)}{2(1-\alpha)}\Big)^{\frac12} \big[1 +(8\alpha)^{\frac12}\big].
\end{equation}

 If $T$ satisfies (\ref{T2}) and  $\alpha, \epsilon \leq \frac{1}{16}$, we may verify that
$$m \leq \sqrt{RT'}/20 - 1.$$
Hence, we can use Lemma \ref{Poisson_tail_lb} to bound (\ref{plb_term3}) from below and conclude
$$1 - \pr\{T' \leq \min\{\tau_e, \tau_1\} , Q(T') > D_2\} \geq \frac13 e^{-\frac{1}{1.9}{\big(m +\frac12\big)^2}}.$$
Observe that for $\alpha = 0$, we have $m = m_0 = \Big(\frac12 \log\big(\frac1\epsilon\big)\Big)^{\frac12}$ and verify that 
$$\frac{1}{1.9}{\big(m_0 +\frac12\big)^2} < \log\big(\frac1\epsilon\big) - \log\big(\frac{17}{15}\big) ,\quad \foral \epsilon \leq \frac{1}{16}.$$
By continuity of $m$ in $\alpha$ (cf. (\ref{m_def})), we can choose the parameter $\alpha = \alpha_0 > 0$ small enough such that
\begin{equation}\label{plb_term4}
    \frac13 e^{-\frac{1}{1.9}{\big(m +\frac12\big)^2}} \geq \frac{17}{15}\epsilon.
\end{equation}

Now, by plugging this relation as well as the preceding bounds in (\ref{plb_term1}) and (\ref{plb_term2}) back in (\ref{p_lb_terms}) we have for all $\epsilon \leq \frac{1}{16}$,
 \begin{equation}\label{plb_sum}
    p(D) \geq (1-\epsilon)\big(\frac{17}{15}\epsilon - \epsilon^2\big) \geq \frac{15}{16} \big(\frac{17}{15} - \frac{1}{16}\big)\epsilon \geq \epsilon,
 \end{equation} 
 if $T\geq  \frac{16}{\alpha_0^2 R}\log\big(\frac1\epsilon\big)$. Therefore, the initial buffer size $D$, defined in (\ref{Dlb_def}), is a lower bound on $D^*(\epsilon)$.
\end{proof}

Note that the result in part (b) of Theorem \ref{Dlb_thm} does not hold for all $\epsilon$. In fact, we can show that $D^*(\epsilon) < T(1-R)$ for a large interruption probability $\epsilon$. In the extreme case $\epsilon =1$, it is clear that $D^*(\epsilon) =0$. Nevertheless, since we are interested in \emph{avoiding} interruptions, we do not study this regime of the interruption probabilities. Comparing the lower bounds obtained in Theorem \ref{Dlb_thm} with the upper bounds obtained in Theorem \ref{Dub_thm}, we observe that they demonstrate a similar behavior as a function of the parameters $T$ and $R$. Now, we can show that the obtained bounds are asymptotically tight.

\begin{corollary}
The upper bounds and lower bounds of $D^*(\epsilon)$ given by Theorems \ref{Dub_thm} and \ref{Dlb_thm} are asymptotically tight, if $R > 1$, or $R < 1$ and $\epsilon \leq \frac{1}{16}$.
\end{corollary}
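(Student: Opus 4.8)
The plan is to prove the corollary by sandwiching $D^*(\epsilon)$ between the matching bounds of Theorems~\ref{Dub_thm} and~\ref{Dlb_thm} and checking that the \emph{relative} gap between them vanishes as $T\to\infty$, in the sense of Definition~\ref{tight_def}. I would treat the two regimes separately; fix $\epsilon\in(0,1)$ throughout, so that the bounds below are positive for $T$ large.

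\textbf{Regime $R>1$.} Here Lemma~\ref{r_bar_lemma} gives $\bar r(R)>0$, so the upper bound $U:=\frac{1}{\bar r(R)}\log(\frac1\epsilon)$ from Theorem~\ref{Dub_thm}(a) is a positive constant independent of $T$, while the lower bound $L(T):=-\frac{1}{\bar r(R)}\log\big(\epsilon+2e^{-\frac{(R-1)^2}{4(R+1)}T}\big)$ from Theorem~\ref{Dlb_thm}(a) increases to $U$ as $T\to\infty$. I would write the gap explicitly, $U-L(T)=\frac{1}{\bar r(R)}\log\!\big(1+\tfrac{2}{\epsilon}e^{-\frac{(R-1)^2}{4(R+1)}T}\big)\to 0$, note that $L(T)$ is bounded away from $0$ for $T$ large, and conclude that since $L(T)\le D^*(\epsilon)\le U$, both $\frac{|U-D^*(\epsilon)|}{D^*(\epsilon)}$ and $\frac{|L(T)-D^*(\epsilon)|}{D^*(\epsilon)}$ are at most $\frac{U-L(T)}{L(T)}\to 0$.

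\textbf{Regime $R<1$, $\epsilon\le\frac1{16}$.} Now $\bar r(R)=0$ by Lemma~\ref{r_bar_lemma}, so the first term in the minimum of Theorem~\ref{Dub_thm}(b) is $+\infty$ and the upper bound reduces to $U(T):=T(1-R)+\big(2TR\log(\frac1\epsilon)\big)^{1/2}$, the hypothesis $R\le 1+(\frac1{2T}\log\frac1\epsilon)^{1/2}$ of part~(b) being automatic for $R\le 1$. For $T\ge C\log(\frac1\epsilon)$, Theorem~\ref{Dlb_thm}(b) supplies $L(T):=T(1-R)+\frac12\big(2TR\log(\frac1\epsilon)\big)^{1/2}\le D^*(\epsilon)\le U(T)$. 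Writing $a(T)=T(1-R)$ and $b(T)=\big(2TR\log(\frac1\epsilon)\big)^{1/2}$, so that $U(T)-L(T)=\tfrac12 b(T)$ and $L(T)\ge a(T)$, the relative gap is at most $\frac{b(T)/2}{a(T)}=\frac{1}{2(1-R)}\big(\tfrac{2R\log(1/\epsilon)}{T}\big)^{1/2}$, which vanishes as $T\to\infty$ because $R<1$ strictly; asymptotic tightness of both bounds then follows exactly as before.

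\textbf{Main difficulty.} There is no genuinely hard step here — the argument is a squeeze once the bounds of Theorems~\ref{Dub_thm} and~\ref{Dlb_thm} are in hand. The two points requiring care are (i) the degenerate value $\bar r(R)=0$ on $R\le 1$, which is what makes the $\sqrt{T}$-type term, rather than the vacuous exponential-type term, the operative upper bound in part~(b); and (ii) the fact that the linear term $T(1-R)$ dominates the $O(\sqrt T)$ fluctuation term, which is precisely why $R=1$ must be excluded: there $a(T)=0$, the upper and lower bounds differ by the fixed factor $2$, and the method delivers only the growth order $\Theta(\sqrt T)$, not a tight leading constant. One should also verify that the denominators in Definition~\ref{tight_def} stay away from $0$, which holds since $D^*(\epsilon)\ge L(T)>0$ for $T$ large in both regimes.
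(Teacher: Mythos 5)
Your proposal is correct and takes essentially the same route the paper intends: the paper's own proof of this corollary is a one-line sketch (``verify the claim by using the upper and lower bounds\ldots and taking the limit as $T$ goes to infinity''), and you have carried out exactly that verification, including the two necessary observations that $L(T)$ stays bounded away from zero and that the $R=1$ case must be excluded because the dominant linear term $T(1-R)$ vanishes there.
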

\begin{proof}
Let $D_l$ and $D_u$ be lower and upper bounds of $D^*(\epsilon)$, respectively. By Definition \ref{tight_def}, for $D_l$ or $D_u$ to be asymptotically tight, it is sufficient to show $\frac{D_u - D_l}{D_l}$ goes to zero as $T$ grows. We may verify this claim by using the upper and lower bounds presented in Theorem \ref{Dub_thm} and Theorem \ref{Dlb_thm}, and taking the limit as $T$ goes to infinity.
%
%
\end{proof}
Next, we numerically obtain the optimal trade-off curve between the interruption probability and initial buffer size, and compare the results with the bounds derived earlier.

\section{Numerical Results}

We use MATLAB simulations to compute the minimum initial buffer size $D^*(\epsilon)$ for a given interruption probability $\epsilon$ in various scenarios. We start from a small initial buffer size $D$, and for each $D$ we compute the interruption probability $p(D)$ via Monte-Carlo method. We increase $D$ until the constraint $p(D) \leq \epsilon$ is satisfied. Since $p(D)$ is monotonically decreasing in $D$, this gives the minimum required buffer size. Here, we restrict $D$ to take only integer values, and round each upper bound value up to the nearest integer, and each lower bound value down to the nearest integer.

\begin{figure}[htbp]
\centering
\vspace{-0.15in}
  \includegraphics[width=3in]{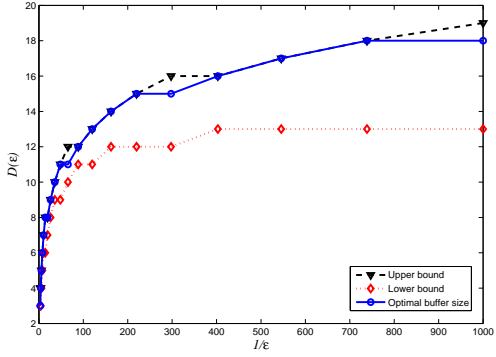}
  \caption{The minimum  buffer size $D^*(\epsilon)$ as a function of the interruption probability. }\label{D_e_fig}
\vspace{-0.1in}
\end{figure}

Figure \ref{D_e_fig} shows the minimum required buffer size $D^*(\epsilon)$  as well as the upper and lower bounds given by Theorems \ref{Dub_thm} and \ref{Dlb_thm} as a function of $\frac1\epsilon$, where the arrival rate is fixed to $R = 1.2$ and the file size $T = 500$. We observe that the numerically computed trade-off curve closely matches our analytical results.

\begin{figure}[htbp]
\centering
\vspace{-0.15in}
  \includegraphics[width=3in]{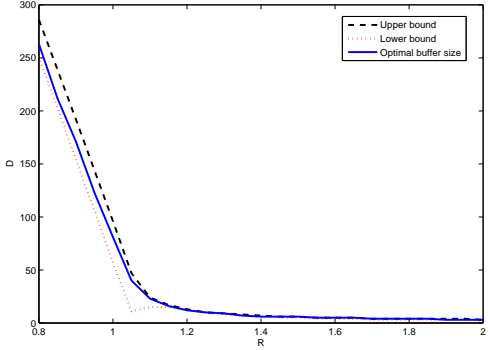}
  \caption{The minimum  buffer size $D^*(\epsilon)$ as a function of the arrival rate $R$. }\label{D_R_fig}
\vspace{-0.1in}
\end{figure}

Figure \ref{D_R_fig} plots the minimum required buffer size $D^*(\epsilon)$  as well as the upper and lower bounds given by Theorems \ref{Dub_thm} and \ref{Dlb_thm} versus the arrival rate $R$, where $\epsilon = 10^{-2}$ and the file size is fixed to $T = 10^3$. Note that when the arrival rate is almost equal or less than the playback rate, increasing the arrival rate can significantly reduce the initial buffering delay. However, for larger arrival rates $D^*(\epsilon)$ is small enough such that increasing $R$ does not help anymore. 

\section{Conclusions}\label{conclusion_sec}

In this paper, we studied the problem of media streaming with focus on the trade-offs between the two QoE metrics---probability of interruption in media playback, and the initial waiting time before starting the playback.  In our system, the user can receive packets of the media stream from multiple sources by requesting packets in each block of the file. We used the fact that sending random linear combinations of the packets within each block of the media file simplifies the packet selection strategies of P2P systems. This fact allowed us to describe the receiver's buffer dynamics as an M/D/1 queue, and explicitly characterize the trade-off between the QoE metrics for different ranges of the system parameters.  We observed that the minimum initial buffer size to attain a desired level of interruption probability remains bounded as the file size grows if the arrival rate is slightly larger than the play rate. Further, when the arrival rate and the play rate match, the initial buffer size needs to scale as the square root of the file size to account for randomness of the arrivals. Finally, our numerical results confirmed that the optimal trade-off curves demonstrate a similar behavior to that predicted by our bounds.

This work is the first step in analytical characterization of QoE trade-offs in wireless media streaming applications.  An interesting extension to this work would be to obtain optimal resource allocation policies to satisfy users who have different interruption probability and initial waiting time targets.

\appendix

\noindent \textbf{Proof of Lemma \ref{r_bar_lemma}}:
\emph{Case I} $(0\leq R \leq 1)$: First note that $\gamma(r)$ is a continuously differentiable function, and $\gamma(0) = 0$. For each $R < 1$, we have $\gamma'(r) > 0$ for all $r \geq 0$. Therefore, $\gamma(r) > 0$ for all $r >0$, i.e., $\bar r(R) = 0$ for each $R<1$.

\emph{Case II} $(1\leq R \leq 2)$: By definition of $\bar r(R)$ in (\ref{r_bar}),
\begin{eqnarray*}
0 = \gamma(\bar r (R)) &=& \bar r(R) + R(e^{-\bar r(R)}-1) \\
&\leq&   \bar r(R) + R(-\bar r(R) + \frac{\bar r^2(R)}{2}).
\end{eqnarray*}
Rearranging the terms in the above relation, gives the lower bound in (\ref{case2}). We show the upper bound in two steps. First, we show that $\gamma(2(R-1)) >0$ for $R > 1$, then we verify that $\gamma(r) \geq 0$ for all $r\geq 2(R-1)$. These two facts imply that $\gamma(r) > 0$ for all $r \geq 2(R-1)$, i.e., $\bar r(R) \leq 2(R-1)$. The first step can be verified by noting that
$$\gamma(2(R-1)) \big|_{R=1} =0, \quad \frac{\partial}{\partial R}\gamma(2(R-1)) >0.$$
It is also straightforward to show that
\begin{equation}\label{gamma_der}
    \frac{\partial}{\partial r}\gamma(r) >0, \quad \foral r\geq \log(R),
\end{equation}
which immediately yields the second step by noting $r \geq 2(R-1) \geq \log(R)$.

\emph{Case III} $( R \geq 2)$: We use a similar technique as in the preceding case.  The upper bound is immediate by the following facts:
$$\gamma(R) = Re^{-R} >0, \quad  \frac{\partial}{\partial r}\gamma(r) >0,\quad  \foral r \geq R.$$
It is also straightforward to check that $\gamma(R-1) <0$ for all $R\geq 2$. Moreover, note that $\gamma(R) >0$. Therefore, by the intermediate value theorem, $\gamma(r)$ has a root in $[R-1, R]$, i.e., $\bar r(R) \geq R-1$.

\begin{lemma}\label{z_lemma}
For all $0 \leq z <1$, the following relation holds:
\begin{equation}\label{z_eq}
    -(1-z)\log(1-z) - z \leq -\frac{z^2}{2}.
\end{equation}
\end{lemma}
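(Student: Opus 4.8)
The plan is to reduce (\ref{z_eq}) to a one-variable sign statement and settle it with elementary calculus. Introduce the auxiliary function
\[
f(z) = -(1-z)\log(1-z) - z + \frac{z^2}{2}, \qquad z \in [0,1),
\]
so that (\ref{z_eq}) is precisely the assertion $f(z) \leq 0$ on $[0,1)$. First I would record the boundary value $f(0) = 0$, and then differentiate. A direct computation gives $f'(z) = \log(1-z) + z$, hence $f'(0) = 0$, and $f''(z) = -\frac{1}{1-z} + 1 = -\frac{z}{1-z}$.

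Since $f''(z) \leq 0$ for every $z \in [0,1)$, the derivative $f'$ is non-increasing on $[0,1)$; combined with $f'(0) = 0$ this yields $f'(z) \leq 0$ throughout $[0,1)$. Hence $f$ is non-increasing on $[0,1)$, and since $f(0) = 0$ we conclude $f(z) \leq 0$ for all $z \in [0,1)$, which is exactly (\ref{z_eq}).

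An alternative route, which I would actually prefer to write down because it exhibits the slack explicitly and never touches the singularity at $z = 1$, is to expand in power series. Using $\log(1-z) = -\sum_{k\geq 1} z^k/k$ for $|z| < 1$, multiplying by $(1-z)$, reindexing, and telescoping the coefficients via $\frac{1}{k} - \frac{1}{k-1} = -\frac{1}{k(k-1)}$, one obtains
\[
-(1-z)\log(1-z) = z - \sum_{k\geq 2} \frac{z^k}{k(k-1)}.
\]
Subtracting $z$ and peeling off the $k = 2$ term gives
\[
-(1-z)\log(1-z) - z = -\frac{z^2}{2} - \sum_{k\geq 3} \frac{z^k}{k(k-1)} \leq -\frac{z^2}{2},
\]
since every term of the remaining series is non-negative for $z \in [0,1)$; this also shows the inequality is tight at $z = 0$ and in fact to second order.

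There is essentially no real obstacle here. The only points requiring care are getting the signs right in the second derivative (so that $f'' \leq 0$, giving concavity of $f$ rather than convexity) and, in the series approach, justifying the rearrangement of the absolutely convergent series on $[0,1)$ together with the telescoping identity for the coefficients.
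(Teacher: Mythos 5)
Your first (calculus) route is essentially the paper's proof: the paper also sets $f(z) = -(1-z)\log(1-z) - z + \tfrac{z^2}{2}$, notes $f(0)=0$, and concludes from $f'(z)=\log(1-z)+z\leq 0$. The only difference is that the paper asserts $f'(z)\leq 0$ directly (implicitly invoking the standard inequality $\log(1-z)\leq -z$), whereas you derive it by one more layer of monotonicity via $f''(z)=-\tfrac{z}{1-z}\leq 0$ and $f'(0)=0$; that is a harmless and if anything more self-contained way to close that gap. Your second (power-series) route is a genuinely different argument and arguably the more informative one: expanding $-(1-z)\log(1-z)-z = -\tfrac{z^2}{2}-\sum_{k\geq 3}\tfrac{z^k}{k(k-1)}$ exhibits the inequality as the statement that a termwise non-negative tail is being dropped, makes the tightness at $z=0$ to second order transparent, and avoids any appeal to monotonicity near the singularity at $z=1$. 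The tradeoff is that the series route requires you to justify the rearrangement and the telescoping identity $\tfrac{1}{k}-\tfrac{1}{k-1}=-\tfrac{1}{k(k-1)}$, while the calculus route is shorter and uses only freshman-level tools. Both are correct; either would suffice here.
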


\begin{proof}
Let $f(z) = -(1-z)\log(1-z) - z + \frac{z^2}{2}$. $f(z)$ is a continuously differentiable function on $[0,1)$. Moreover, $f(0) = 0$, and $f'(z) =\log(1-z) +  z \leq 0$. Therefore, $f(z) \leq f(0) = 0$, for all $z \in [0,1)$.
\end{proof}

\begin{lemma}\label{Poisson_tail}
\emph{[Glynn] }Let $Z$ be a Poisson random variable with mean $\lambda$. If $\lambda \geq 2$, and $k \geq 2$, then
\begin{equation}\label{lower_tail}
    \pr \{Z \leq \lambda  - k \} \leq \exp\Big(\frac{1}{2\lambda}{\big(k - \frac32\big)^2}\Big).
\end{equation}
\end{lemma}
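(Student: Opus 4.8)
The plan is to obtain the sharper estimate $\pr\{Z \leq \lambda - k\} \leq e^{-k^2/(2\lambda)}$ by the standard exponential-tilting (Chernoff) argument, and then to note that this implies (\ref{lower_tail}) because $k \geq 2$. First I would dispose of the degenerate range $k \geq \lambda$: there $\lambda - k \leq 0$, so the event $\{Z \leq \lambda - k\}$ is either empty or equal to $\{Z=0\}$, and in the latter case it has probability $e^{-\lambda}$, which is trivially dominated by the right-hand side. So assume $0 < k < \lambda$.

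For every $s \geq 0$, Markov's inequality applied to $e^{-sZ}$, together with the Poisson moment generating function $\E[e^{-sZ}] = \exp\!\big(\lambda(e^{-s}-1)\big)$, gives
\[
\pr\{Z \leq \lambda - k\} \;=\; \pr\{e^{-sZ} \geq e^{-s(\lambda-k)}\} \;\leq\; \exp\!\big(s(\lambda-k) + \lambda(e^{-s}-1)\big).
\]
I would then minimize the exponent over $s \geq 0$. Since $k < \lambda$, the minimizer is the interior point $s^\star = -\log(1 - k/\lambda) > 0$, and substituting it — writing $z = k/\lambda \in (0,1)$ — collapses the exponent to
\[
-(\lambda-k)\log\!\big(1 - \tfrac{k}{\lambda}\big) - k \;=\; \lambda\big[-(1-z)\log(1-z) - z\big].
\]
Now Lemma \ref{z_lemma} is exactly the estimate needed: it bounds the bracket by $-z^2/2$, so the exponent is at most $-\lambda z^2/2 = -k^2/(2\lambda)$, i.e.\ $\pr\{Z \leq \lambda - k\} \leq e^{-k^2/(2\lambda)}$. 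Finally, since $k \geq 2 > \tfrac34$ we have $(k - \tfrac32)^2 = k^2 - 3k + \tfrac94 \leq k^2$, hence $e^{-k^2/(2\lambda)} \leq \exp\!\big(-\tfrac{1}{2\lambda}(k-\tfrac32)^2\big)$, which is (\ref{lower_tail}).

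There is no real obstacle here; the computation is routine Chernoff bounding. The only points requiring a little care are (i) checking that the tilting optimizer $s^\star$ is admissible ($s^\star \geq 0$) precisely when $k < \lambda$, which is why the range $k \geq \lambda$ is split off at the outset, and (ii) recognizing that the optimized exponent is exactly the quantity controlled by Lemma \ref{z_lemma} — presumably the reason that elementary inequality is isolated as its own lemma in the appendix. The constant $\tfrac32$ in the statement is merely slack inherited from the form convenient in the cited source; the argument above in fact proves the cleaner bound $\pr\{Z \leq \lambda - k\} \leq e^{-k^2/(2\lambda)}$, valid for all $k,\lambda > 0$.
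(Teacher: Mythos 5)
Your argument is correct and takes a genuinely different route from the paper's. First, a small caveat on the target: as printed, the exponent in \eqref{lower_tail} is missing a minus sign — the right-hand side should read $\exp\!\big(-\tfrac{1}{2\lambda}(k-\tfrac32)^2\big)$, as is clear from the paper's own proof and from how the lemma is invoked in \eqref{QT'_ub}; otherwise the stated bound exceeds $1$ and is vacuous. You implicitly (and correctly) proved the intended form. As for method: the paper does not use a Chernoff argument at all. It appeals to a quantitative normal-approximation estimate of Glynn \cite{Glynn87}, namely $\pr\{Z\le\lambda-k\}\le b_\lambda\,Q\!\big((k-\tfrac32)/\sqrt\lambda\big)$ with $b_\lambda=(1+\tfrac1\lambda)e^{1/(8\lambda)}$, and then bounds $b_\lambda\le 2$ for $\lambda\ge 2$ and $Q(x)\le\tfrac12 e^{-x^2/2}$ for $x\ge 0$; the two hypotheses $\lambda\ge 2$ and $k\ge 2$ exist precisely to make Glynn's inequality applicable and to keep the $Q$-argument nonnegative. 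Your exponential-tilting argument is more elementary and self-contained: it needs only Markov's inequality, the Poisson MGF, and the inequality already isolated as Lemma~\ref{z_lemma} (which the paper keeps on hand for step~(c) of Theorem~\ref{Dub_thm}(b)), and it yields the strictly sharper bound $\pr\{Z\le\lambda-k\}\le e^{-k^2/(2\lambda)}$ with no side conditions beyond $k,\lambda>0$, from which the stated form follows via $(k-\tfrac32)^2\le k^2$ for $k\ge\tfrac34$. This exposes the $\tfrac32$ as pure slack inherited from Glynn's bound. The one thing the paper's route buys is uniformity with the matching lower bound, Lemma~\ref{Poisson_tail_lb}, which is also drawn from the Glynn/Fox--Glynn family; but for this lemma alone your proof is cleaner.
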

\begin{proof}
Glynn \cite{Glynn87} proves that for $\lambda \geq 2$ and $k \geq 2$
\begin{equation}\label{lower_tail_Q}
    \pr \{Z \leq \lambda  - k \} \leq b_\lambda Q\Big(\frac{k - \frac32}{\sqrt{\lambda}}\Big),
\end{equation}
where
$$Q(x) = \int_x^\infty \frac{1}{\sqrt{2\pi}} e^{-\frac{t^2}{2}} dt, \quad b_\lambda = (1+\frac1\lambda)\ e^{\frac{1}{8\lambda}}.$$

The result follows from the fact that $b_\lambda \leq 2$ for $\lambda \geq 2$, and
$$Q(x) \leq \frac{1}{2} \exp\big(-\frac{x^2}{2}\big), \quad \foral x \geq 0.$$
\end{proof}

\begin{lemma}\label{Poisson_tail_lb}
Let $Z$ be a Poisson random variable with mean $\lambda$. For all $m \leq \frac{\sqrt{\lambda}}{20}-1$
\begin{equation}\label{lower_tail_lb}
    \pr \{Z \leq \lambda  - m\sqrt{\lambda} \} \geq \frac13 \exp\Big(-\frac{1}{1.9}{\big(m +\frac12\big)^2}\Big).
\end{equation}
\end{lemma}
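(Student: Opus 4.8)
The plan is to bound the distribution function from below by a partial sum of Poisson point masses over a window of about $\sqrt\lambda$ consecutive integers lying just below $\lambda-m\sqrt\lambda$, to bound each point mass from below by a Gaussian-type quantity via Stirling's formula, and to recognize the resulting partial sum as a lower Riemann sum of a Gaussian integral whose value exceeds the claimed right-hand side. A window of length of order $\sqrt\lambda$ is essential: on that scale the Poisson mass function near its mean is comparable to $\lambda^{-1/2}$ times a Gaussian, so summing $\Theta(\sqrt\lambda)$ such terms produces a bound of constant order, matching the constant right-hand side. The hypothesis $m\le\tfrac{\sqrt\lambda}{20}-1$ enters precisely here: rewritten as $(m+1)\sqrt\lambda\le\lambda/20$, it guarantees that every integer in the window stays within relative distance $\tfrac1{20}+o(1)$ of $\lambda$, the regime in which the Stirling/Taylor estimates below are valid. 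One may assume $m\ge0$, hence $\lambda\ge400$; for $m\le0$ the claim follows at once from monotonicity of the distribution function, since $\pr\{Z\le\lambda-m\sqrt\lambda\}\ge\pr\{Z\le\lambda\}$ exceeds $\tfrac13$ by an elementary estimate while the right-hand side is at most $\tfrac13$.

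Concretely, set $n=\lfloor\lambda-m\sqrt\lambda\rfloor$ and $N=\lceil\sqrt\lambda\,\rceil$. For each integer $k$ with $n-N<k\le n$, Robbins' form of Stirling's formula, $k!\le\sqrt{2\pi k}\,(k/e)^ke^{1/(12k)}$, gives
\[
\pr\{Z=k\}=\frac{e^{-\lambda}\lambda^k}{k!}\ \ge\ \frac{e^{-1/(12k)}}{\sqrt{2\pi k}}\,\exp\!\Big(-\lambda\big[u_k+(1-u_k)\log(1-u_k)\big]\Big),\qquad u_k:=\frac{\lambda-k}{\lambda}.
\]
The hypothesis forces $u_k\le\tfrac1{20}+\tfrac1\lambda$ and $k\ge\tfrac9{10}\lambda$ throughout the window. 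Since $u+(1-u)\log(1-u)=\sum_{j\ge2}u^j/(j(j-1))$, a short power-series estimate valid in this small-$u$ range (complementary to Lemma \ref{z_lemma}) gives $u_k+(1-u_k)\log(1-u_k)\le u_k^2/1.9$, while $k\le\lambda$ and $e^{-1/(12k)}\ge1-o(1)$ make the prefactor at least $(1-o(1))/\sqrt{2\pi\lambda}$. Hence, for every $k$ in the window,
\[
\pr\{Z=k\}\ \ge\ \frac{1-o(1)}{\sqrt{2\pi\lambda}}\,\exp\!\Big(-\frac{(\lambda-k)^2}{1.9\,\lambda}\Big).
\]

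Summing over $k$ and substituting $x=\lambda-k$ turns the bound into $\tfrac{1-o(1)}{\sqrt{2\pi\lambda}}\sum_x e^{-x^2/(1.9\lambda)}$, the sum running over the $N$ consecutive integers $x$ filling out $[m\sqrt\lambda,\,(m+1)\sqrt\lambda+1]$. Because $x\mapsto e^{-x^2/(1.9\lambda)}$ is decreasing on $[0,\infty)$, this sum dominates the integral of the same function over essentially that interval, and the scaling $x=s\sqrt\lambda$ (which absorbs the $\sqrt\lambda$) produces $\tfrac{1-o(1)}{\sqrt{2\pi}}\int_{m+o(1)}^{\,m+1}e^{-s^2/1.9}\,ds$. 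It then remains to verify the purely analytic inequality
\[
\frac{1}{\sqrt{2\pi}}\int_{m}^{m+1}e^{-s^2/1.9}\,ds\ \ge\ \frac13\,e^{-(m+1/2)^2/1.9}\qquad\foral m\ge0,
\]
with a little slack to absorb the $o(1)$ losses. I would prove it by splitting off the range $m\ge1$, where $s\mapsto e^{-s^2/1.9}$ is convex on $[m,m+1]$ and the midpoint (tangent-line) bound gives $\int\ge e^{-(m+1/2)^2/1.9}\ge\tfrac{\sqrt{2\pi}}{3}\,e^{-(m+1/2)^2/1.9}$ outright, from the bounded range $0\le m\le1$, which is dispatched by elementary quadrature using monotonicity of both sides in $m$ and subdividing into finitely many subintervals.

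The main obstacle is that the constants $\tfrac13$ and $1.9$ are tight: even the clean version of the argument beats the target by only single-digit percentages near $m=0$, so none of the steps can be done crudely. In particular one may not lower-bound the Gaussian integral by (length)$\times$(value at the far endpoint), and one must check that the accumulated $o(1)$ losses---from rounding $n$, from $k\neq\lambda$ in the prefactor $\sqrt{2\pi k}$, from the $e^{-1/(12k)}$ factor, and from truncating the window at length $N$---do not consume the margin; this is the reason the hypothesis effectively forces $\lambda\ge400$ when $m\ge0$. The remaining delicate point is that the window must be simultaneously long enough to accumulate a constant and short enough to keep $u_k$ small for the Taylor bound, and $m\le\sqrt\lambda/20-1$ is exactly the condition making both possible; carefully controlling $u_k$ over the window is where most of the work goes.
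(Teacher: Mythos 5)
Your approach is correct and shares the paper's high-level structure---lower-bound the tail by summing $\Theta(\sqrt\lambda)$ consecutive Poisson point masses just below $\lambda-m\sqrt\lambda$, each bounded below by a scaled Gaussian---but the two ingredients you use to execute it are genuinely different from the paper's. For the pointwise bound, the paper cites Proposition 6 of Fox--Glynn directly, giving $\pr\{Z=\lambda-i\}\ge \frac{e^{-1/12}}{\sqrt{2\pi\lambda}}e^{-i^2/(1.9\lambda)}$ for $i\le\lambda/4$ as a black box, whereas you rederive it from Robbins' form of Stirling together with the power-series bound $u+(1-u)\log(1-u)\le u^2/1.9$ valid for small $u$ (the ``$1.9$'' arises identically in both places, so your rederivation is not only self-contained but also transparent about where the constant comes from). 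For the summation, the paper expands $(m\sqrt\lambda+j)^2\le m^2\lambda+3mj\sqrt\lambda$ (using $j\le m\sqrt\lambda$) to reduce the sum to a geometric series, evaluates it exactly, and then checks a one-variable numerical inequality in $m$; you instead treat the sum as a lower Riemann sum of the Gaussian $\sqrt\lambda\int_{m}^{m+1}e^{-s^2/1.9}\,ds$ and close with the midpoint rule on the convex range $m\ge1$ plus finite quadrature on $0\le m<1$. Your route is slightly tighter analytically (the Riemann-sum integral exceeds the paper's geometric-series surrogate since $2mt+t^2\le 3mt$ on $[0,1]$ for $m\ge1$) and avoids any reference that might not be at hand, at the cost of somewhat heavier bookkeeping. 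One place to be more careful than the sketch suggests: the ``$o(1)$'' from the endpoint shift $\delta=O(1/\sqrt\lambda)$ is \emph{not} uniformly small relative to $\int_m^{m+1}e^{-s^2/1.9}\,ds$ if you estimate the lost sliver crudely by $\delta\cdot e^{-m^2/1.9}$; you must instead apply the midpoint bound directly to the shifted interval $[m+\delta,m+1]$ and observe that the constraint $m+1\le\sqrt\lambda/20$ makes $\delta(2m+1)\le\frac1{10}$, so the midpoint value drops only by a factor $e^{-\delta(2m+1)/3.8}\ge e^{-1/38}$ and the length only by $1-\delta\ge\frac{19}{20}$---together comfortably within the $\approx14\%$ slack between $1/\sqrt{2\pi}$ and $1/3$. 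With that tightened, the proof goes through.
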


\begin{proof}
It follows from Proposition 6 of \cite{FoxGlynn88} that for all $0\leq i \leq \lambda/4$
\begin{equation}\label{poiss_lb}
    \pr \{Z = \lambda  - i \} \geq c_{\lambda} \exp\Big(-\frac{i^2}{1.9\lambda}\Big),
\end{equation}
where $c_{\lambda} =  \frac{e^{-\frac{1}{12}}}{\sqrt{2\pi\lambda}}$. Therefore, for $m \leq \sqrt{\lambda}/20-1$, we may verify that
\begin{eqnarray}
      \pr \{Z \leq \lambda  -  m\sqrt{\lambda}\} &\geq& c_{\lambda} \sum_{j = 0}^{\sqrt{\lambda}-1} \exp\Big(-\frac{(m\sqrt{\lambda} +j)^2}{1.9\lambda}\Big) \nonumber \\
      &\geq& c_{\lambda} e^{-\frac{m^2}{1.9}} \sum_{j = 0}^{\sqrt{\lambda}-1}  \exp\Big({-\frac{3m}{1.9\sqrt{\lambda}}j}\Big) \nonumber \\
      &=& c_{\lambda} e^{-\frac{m^2}{1.9}}  \frac{1-e^{-\frac{3m}{1.9}}}{1-e^{-\frac{3m}{1.9\sqrt{\lambda}}}} \nonumber \\
      &\geq& \Big( \frac{1.9e^{-\frac{1}{12}}}{3m\sqrt{2\pi}}(1-e^{-\frac{3m}{1.9}})\Big) e^{-\frac{m^2}{1.9}} \nonumber \\
      &\geq& \Big(\frac13 e^{-\frac{m}{1.9}-\frac{1}{7.6}}\Big)e^{-\frac{m^2}{1.9}} \nonumber \\
      &=& \frac13 \exp\Big(-\frac{1}{1.9}{\big(m +\frac12\big)^2}\Big). \nonumber
\end{eqnarray}
\end{proof}

\begin{lemma}\label{div_lemma}
Let $Q(t)$ be given by (\ref{buffer}). For $s, \delta > 0$, define the  boundary function
$$u(t) = D + \delta + (R+s-1)t.$$

If $s\leq R$, then the probability of crossing the boundary in any interval $[0,T]$ is bounded from above as 
$$\pr\Big\{\sup_{0\leq t\leq T}Q(t) \geq u(t) \Big\} \leq \exp\big(-\frac{s\cdot \delta}{R}\big). $$
\end{lemma}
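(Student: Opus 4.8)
The plan is to reduce the boundary‑crossing probability to a running‑maximum estimate for the centered arrival process and then control it with an exponential martingale and Doob's maximal inequality, in exactly the style of Lemmas~\ref{subMG_lemma}--\ref{pub_lemma}. The first step is to rewrite the event: since the deterministic parts of $Q(t)=D+A(t)-t$ and of $u(t)=D+\delta+(R+s-1)t$ are affine, the inequality $Q(t)\ge u(t)$ is equivalent to $A(t)-(R+s)t\ge\delta$. So, writing $Z(t):=A(t)-(R+s)t$, the quantity to be bounded equals $\pr\{\sup_{0\le t\le T}Z(t)\ge\delta\}$.

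Next I would introduce $\phi(\theta):=R(e^{\theta}-1)-\theta(R+s)$ and let $\theta^{*}>0$ be its unique strictly positive root: $\theta^{*}$ exists because $\phi(0)=0$, $\phi'(0)=R-(R+s)=-s<0$ and $\phi(\theta)\to+\infty$, and convexity of $\phi$ shows there is no other positive root and that $\phi\le 0$ exactly on $[0,\theta^{*}]$. A one‑line computation identical to steps (a)--(b) in the proof of Lemma~\ref{subMG_lemma} (independent increments together with the Poisson moment generating function) gives $\E[e^{\theta^{*}Z(t+h)}\mid\mathcal F_t]=e^{\theta^{*}Z(t)}e^{h\phi(\theta^{*})}=e^{\theta^{*}Z(t)}$, so $X(t):=e^{\theta^{*}Z(t)}$ is a nonnegative martingale with $\E[X(0)]=1$ and right‑continuous paths. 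Applying Doob's maximal inequality \cite{SP_book} to $X$ on $[0,T]$ at level $e^{\theta^{*}\delta}$ gives $\pr\{\sup_{0\le t\le T}X(t)\ge e^{\theta^{*}\delta}\}\le \E[X(T)]\,e^{-\theta^{*}\delta}=e^{-\theta^{*}\delta}$, hence $\pr\{\sup_{0\le t\le T}Z(t)\ge\delta\}\le e^{-\theta^{*}\delta}$.

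It then remains to show $\theta^{*}\ge s/R$, which upgrades the estimate to the claimed $e^{-s\delta/R}$. Since $\phi\le 0$ precisely on $[0,\theta^{*}]$, it suffices to check $\phi(s/R)\le 0$, i.e. $R(e^{s/R}-1)\le s+s^{2}/R$, i.e. $e^{x}\le 1+x+x^{2}$ with $x=s/R$; the hypothesis $s\le R$ puts $x$ in $[0,1]$, where $e^{x}-1-x=\sum_{k\ge 2}x^{k}/k!\le x^{2}\sum_{k\ge 2}1/k!=(e-2)x^{2}\le x^{2}$. This elementary inequality, together with the convexity argument that locates $\theta^{*}\ge s/R$, is the only non‑mechanical point; everything else is a direct reuse of the exponential‑martingale/maximal‑inequality machinery already set up earlier in the paper. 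One minor technical remark is worth a sentence: Doob's maximal inequality over the continuous interval $[0,T]$ uses right‑continuity of sample paths, which holds because $A$, and hence $Z$ and $X$, are right‑continuous; alternatively one applies the inequality on a finite time grid and lets the mesh shrink to zero.
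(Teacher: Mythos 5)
Your proof is correct and takes essentially the same route as the paper: you exponentiate the drifted process $A(t)-(R+s)t$ to get a nonnegative martingale whose rate $\theta^*$ solves the same tilting equation as the paper's $\varphi(r)=e^r-1-r(1+s/R)=0$ (the paper's $\varphi$ is just your $\phi$ divided by $R$), apply Doob's maximal inequality, and lower-bound the root by $s/R$ via $e^x\le 1+x+x^2$ on $[0,1]$. The only differences are cosmetic---you keep the constant $e^{-\theta^*\delta}$ outside the martingale and invoke convexity to locate the unique positive root, whereas the paper folds $\delta$ into $Q'(t)$ and uses the intermediate value theorem.
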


\begin{proof}
Define $Z(t) = e^{rQ'(t)}$, where
\begin{equation}\label{Q'}
    Q'(t) = Q(t) - u(t) = -\delta + A(t) + (R+s)t,
\end{equation}
and $r > 0$ satisfies
$$\varphi(r) := e^r -1 - r(1+\frac{s}{R}) = 0.$$
Similarly to the proof of Lemma \ref{subMG_lemma}, we can show that $Z(t)$ is a martingale. This allows us to use Doob's maximal inequality to obtain
\begin{eqnarray}
  \pr\Big\{\sup_{0\leq t\leq T}Q(t) \geq u(t) \Big\} &=&  \pr\Big\{\sup_{0\leq t\leq T}Z(t) \geq 1 \Big\} \nonumber \\
  &\leq& \E[Z(T)] = e^{-r\cdot \delta}. \nonumber
\end{eqnarray}

Now it is sufficient to show that $r \geq \frac{s}{R}$. Observe that for all $0 \leq x \leq 1$
$$e^x \leq 1 + x + x^2.$$
Hence, for all  $s \leq R$ 
\begin{eqnarray}
\varphi(\frac{s}{R}) &=&  e^{\frac{s}{R}} -1 - \frac{s}{R}(1+\frac{s}{R}) \nonumber \\
&\leq& \frac{s}{R} + (\frac{s}{R})^2  - \frac{s}{R}(1+\frac{s}{R}) = 0.
\end{eqnarray}

Moreover, $\varphi(r) \rightarrow \infty$ when $r \rightarrow \infty$. Therefore, by intermediate value theorem there exists $r \geq \frac{s}{R}$ such that $\varphi(r) =0$. This completes the proof.
\end{proof}

\bibliographystyle{unsrt}
\bibliography{ISIT}

%

\end{document}